\newtheorem{theorem}{Theorem}[section]
\newtheorem{lemma}{Lemma}[section]
\newtheorem{proposition}{Proposition}[section]
\newtheorem{define}{Definition}[section]
\newtheorem{remark}{Remark}[section]
\newtheorem{example}{Example}[section]
\newtheorem{problem}{\ \ \ Problem}[section]
\def\R{{\mathbb{R}}}
\def\a{{\mathbf{a}}}
\def\C{{\mathcal{C}}}
\def\V{{\mathcal{V}}}
\def\x{{\mathbf{x}}}
\def\b{{\mathbf{b}}}
\lstdefinelanguage{Maple}{
	keywords={if, while, do, else, end, for, from, to,then},
	keywordstyle=\color{blue}\bfseries,
	ndkeywords={class, export, boolean, throw, implements, import, this},
	ndkeywordstyle=\color{darkgray}\bfseries,
	identifierstyle=\color{black},
	sensitive=false,
	comment=[l]{//},
	morecomment=[s]{/*}{*/},
	commentstyle=\color{purple}\ttfamily,
	stringstyle=\color{red}\ttfamily,
	morestring=[b]',
	morestring=[b]"
}
\lstdefinelanguage{SOStools}{
	keywords={syms,sosprogram,monomials,sosineq,sossetobj,sossolve,sosgetsol,sospolyvar},
	keywordstyle=\color{blue}\bfseries,
	ndkeywords={syms,sosprogram,monomials,sosineq,sossetobj,sossolve,sosgetsol},
	ndkeywordstyle=\color{blue}\bfseries,
	identifierstyle=\color{black},
	sensitive=false,
	comment=[l]{//},
	morecomment=[s]{/*}{*/},
	commentstyle=\color{purple}\ttfamily,
	stringstyle=\color{red}\ttfamily,
	morestring=[b]',
	morestring=[b]"
}
\newenvironment{breakablealgorithm}
{% \begin{breakablealgorithm}
		\begin{center}
			\refstepcounter{algorithm}% New algorithm
			\hrule height.8pt depth0pt \kern2pt% \@fs@pre for \@fs@ruled »­Ïß
			\renewcommand{\caption}[2][\relax]{% Make a new \caption
				{\raggedright\textbf{\ALG@name~\thealgorithm} ##2\par}%
				\ifx\relax##1\relax % #1 is \relax
				\addcontentsline{loa}{algorithm}{\protect\numberline{\thealgorithm}##2}%
				\else % #1 is not \relax
				\addcontentsline{loa}{algorithm}{\protect\numberline{\thealgorithm}##1}%
				\fi
				\kern2pt\hrule\kern2pt
			}
		}{% \end{breakablealgorithm}
		\kern2pt\hrule\relax% \@fs@post for \@fs@ruled »­Ïß
	\end{center}
}
\begin{document}
%
% paper title
% Titles are generally capitalized except for words such as a, an, and, as,
% at, but, by, for, in, nor, of, on, or, the, to and up, which are usually
% not capitalized unless they are the first or last word of the title.
% Linebreaks \\ can be used within to get better formatting as desired.
% Do not put math or special symbols in the title.

\title{Proving Information Inequalities by Gaussian Elimination}

%
%
% author names and IEEE memberships
% note positions of commas and nonbreaking spaces ( ~ ) LaTeX will not break
% a structure at a ~ so this keeps an author's name from being broken across
% two lines.
% use \thanks{} to gain access to the first footnote area
% a separate \thanks must be used for each paragraph as LaTeX2e's \thanks
% was not built to handle multiple paragraphs
%

\author{
	Laigang~Guo,~
       Raymond~W.~Yeung~%\IEEEmembership{Fellow,~IEEE,}
       and~Xiao-Shan~Gao%,~\IEEEmembership{Member,~IEEE}% <-this % stops a space
\thanks{L. Guo is with the School of Mathematical Sciences, Beijing Normal University, Beijing. e-mail: (lgguo@bnu.edu.cn).}% <-this % stops a space
\thanks{R. W. Yeung is with the Institute of Network Coding, The Chinese University of Hong Kong, Hong Kong. e-mail: (whyeung@ie.cuhk.edu.hk).}
\thanks{X.-S. Gao is with the Key Laboratory of Mathematics Mechanization, Chinese Academy of Sciences, Beijing. e-mail: (xgao@mmrc.iss.ac.cn).}
%\thanks{Manuscript received April 19, 2005; revised September 17, 2014.}
}

\maketitle

% As a general rule, do not put math, special symbols or citations
% in the abstract or keywords.

\doublespacing

\begin{abstract}
\noindent
The proof of information inequalities and identities under linear constraints on the information measures is an important problem in information theory. 
For this purpose, ITIP and other variant algorithms have been developed and implemented, which are all based on solving a linear program (LP). 
%In particular, an identity $f = 0$ is verified by solving two LPs, one for $f \ge 0$ and one for $f \le 0$.
In this paper, we develop a method with symbolic computation. 
Compared with the known methods, our approach can completely avoids the use of linear programming which may cause numerical errors.
Our procedures are also more efficient computationally. 
\end{abstract}

%Note that keywords are not normally used for peerreview papers.
\begin{IEEEkeywords}
Entropy, mutual information, information inequality, information identity, machine proving, ITIP.
\end{IEEEkeywords}

\IEEEpeerreviewmaketitle

\section{Introduction}

In information theory, we may need to prove various information inequalities and identities that involve Shannon's information measures. For example, such information inequalities and identities play a crucial role in establishing the converse of most coding theorems. However, proving an information inequality or identity  
involving more than a few random variables can be highly non-trivial.

To tackle this problem, a framework for linear information inequalities
was introduced in \cite{Yeung1997}. Based on this framework, the problem of 
verifying Shannon-type inequalities can be formulated as a linear program (LP), and a software package based on MATLAB called Information Theoretic Inequality Prover (ITIP) was developed \cite{Yeung.Yan1996}.
Subsequently, different variations of ITIP have been developed. Instead of MATLAB,
Xitip \cite{Pulikkoonattu2006} uses a C-based linear programming solver, and it has  been further developed into its web-based version, oXitip \cite{Rathenakar2020}. 
minitip \cite{Csirmaz2016} is a C-based version of ITIP that adopts a simplified syntax
and has a user-friendly syntax checker.
psitip \cite{Li2020} is a Python library that can verify unconstrained/constrained/existential entropy inequalities. It is a computer algebra system where random variables, expressions, and regions are objects that can be manipulated. 
AITIP \cite{Ho2020} is a cloud-based platform that not only provides analytical proofs for Shannon-type inequalities but also give hints on constructing a smallest counterexample in case the inequality to be verified is not a Shannon-type inequality.

Using the above LP-based approach, to prove an information identity $f = 0$, two LPs need to be solved, one for proving the inequality $f \ge 0$ and the other for proving the inequality $f \le 0$. Roughly speaking, the amount of computation for proving an information identity is twice the amount for proving an information inequality. If the underlying random variables exhibit certain Markov or functional dependence structures, there exist more efficient approaches to proving information identities \cite{Yeung2002}\cite{Chan2019}.

The LP-based approach is in general not computationally efficient because it does not take advantage of the special structure of the underlying LP. To tackle this issue, we developed in \cite{Guo-Yeung-Gao2023} a set of algorithms that can be implemented by symbolic computation. Based on these algorithms, we devised procedures for reducing the original LP to the minimal size, which can be solved easily. These procedures are computationally more efficient than solving the original LP directly. 
%Based on these algorithms, procedures for verifying linear information inequalities and identities are devised. 
%Compared with LP based algorithms, our procedures can produce analytical proofs that are both human-verifiable and free of numerical errors. The procedures are more efficient computationally. For constrained inequalities, by taking advantage of the algebraic structure of the problem, the size of the LP that needs to be solved can be significantly reduced. For identities, instead of solving two LPs, the identity can be verified directly with very little computation.
%
In this paper, we develop a different symbolic approach which not only make the reductions from the original LP to the minimal size more efficient, but also in many cases can prove the information inequality without solving any LP. 
%Instead of transforming the problem into a general LP to be solved numerically, we develop algorithms that can be implemented by symbolic computation, and based on these algorithms, procedures for proving information inequalities and identities are devised. 

The specific contributions of this paper are:

1) We develop a heuristic method to prove an information inequality. This heuristic method does not prove an information inequality by directly solving the associated LP, but rather expedites the proof process through polynomial reduction (Gaussian elimination).

2) This heuristic method may not succeed in proving the inequality. If it does not succeed, it can simplify the original LP into a smaller LP.

%Based on our previous work in \cite{Guo-Yeung-Gao2022}\cite{Guo-Yeung-Gao2023}, we propose a simpler method that only relies on symbolic computation without numerical computation.

%2) This method can give the explicit proof process of information inequality quickly without any numerical redundancy.

3) We give several examples that verify the effectiveness of our method.

\section{Information inequality preliminaries}
In this section, we present some basic results related to information inequalities and their verification. For a comprehensive discussion on the topic, we refer the reader to 
\cite[Chs.~13-15]{Yeung-Li2021,Yeung2008}.

It is well known that all Shannon's information measures, namely entropy, conditional entropy, mutual information, and conditional mutual information are always nonnegative. The nonnegativity of all Shannon's information measures forms a set of
inequalities called the {\it basic inequalities}. The set of basic inequalities, however, is not minimal in the sense that some basic inequalities are implied by the others. For example,
$$H(X|Y)\geq0\ {\rm and}\ I(X;Y)\geq0,$$
which are both basic inequalities involving random variables $X$ and $Y$, imply
$$H(X)=H(X|Y)+I(X;Y)\geq0,$$
again a basic inequality involving $X$ and $Y$.
%In order to eliminate such redundancies, the minimal subset of the basic inequalities was found in \cite{Yeung1997}. 

Throughout this paper, all random variables are discrete. Unless otherwise specified, all 
information expressions involve some or all of the random variables $X_1,X_2, \ldots,X_n$. The value of $n$ will be specified when necessary.  
Denote the set $\{1,2,\ldots,n\}$ by $\mathcal{N}_n$,  the set $\{0,1,2,\ldots\}$ by $\mathbb{N}_{\ge0}$ and the set $\{1,2,\ldots\}$ by $\mathbb{N}_{>0}$.
%and the sequence $[1,2,\ldots,n]$ by Seq$_n$.

\begin{theorem}{\rm \cite{Yeung1997}}
	\label{element-ine}
	Any Shannon's information measure can be expressed as a conic combination of the following two elemental forms of
	Shannon's information measures:
	
	i) $H(X_i|X_{\mathcal{N}_n-\{i\}})$
	
	ii) $I(X_i;X_j|X_K)$, where $i\neq j$ and $K\subseteq \mathcal{N}_n-\{i,j\}$.
	
\end{theorem}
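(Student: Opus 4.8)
The plan is to reduce an arbitrary Shannon measure to the two elemental forms by repeatedly applying the chain rule and then checking that every coefficient produced is nonnegative, so that the resulting expression is genuinely a conic combination. First I would observe that it suffices to treat the two most general measures, namely the conditional entropy $H(X_G\mid X_{G''})$ and the conditional mutual information $I(X_G;X_{G'}\mid X_{G''})$ for pairwise disjoint index sets $G,G',G''\subseteq\mathcal{N}_n$, since a joint entropy $H(X_G)$ and an unconditional mutual information $I(X_G;X_{G'})$ are recovered by taking $G''=\emptyset$.

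For the conditional mutual information, I would write $G=\{g_1,\dots,g_s\}$ and $G'=\{g'_1,\dots,g'_t\}$ and apply the chain rule for mutual information in both arguments to obtain
\begin{equation}
I(X_G;X_{G'}\mid X_{G''})=\sum_{a=1}^{s}\sum_{b=1}^{t} I\!\left(X_{g_a};X_{g'_b}\mid X_{L_{a,b}}\right),
\end{equation}
where $L_{a,b}=\{g_1,\dots,g_{a-1}\}\cup\{g'_1,\dots,g'_{b-1}\}\cup G''$. Because $G$ and $G'$ are disjoint we have $g_a\neq g'_b$, and $L_{a,b}\subseteq\mathcal{N}_n-\{g_a,g'_b\}$, so each summand is an elemental form of type ii carrying coefficient $+1$.

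For the conditional entropy, the chain rule for entropy gives $H(X_G\mid X_{G''})=\sum_{a=1}^{s}H(X_{g_a}\mid X_{M_a})$ with $M_a=\{g_1,\dots,g_{a-1}\}\cup G''$, so it remains to handle a single-variable conditional entropy $H(X_i\mid X_S)$ for $S\subseteq\mathcal{N}_n-\{i\}$. Setting $T=\mathcal{N}_n-\{i\}-S$, I would invoke the completion identity
\begin{equation}
H(X_i\mid X_S)=H(X_i\mid X_{\mathcal{N}_n-\{i\}})+I(X_i;X_T\mid X_S),
\end{equation}
whose first term is an elemental form of type i and whose second term, being a mutual information between the single variable $X_i$ and the group $X_T$, decomposes into a conic sum of type ii forms exactly as in the previous paragraph. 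Collecting all contributions then yields the desired conic combination.

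The step requiring the most care is ensuring nonnegativity of the coefficients, which is precisely what distinguishes a conic combination from an arbitrary linear one. The chain rules for both entropy and mutual information are the essential tool here because they always produce unit coefficients; the one genuinely nontrivial maneuver is the completion step for conditional entropy, where the conditioning set must be enlarged from $S$ to the full set $\mathcal{N}_n-\{i\}$. This enlargement is compensated exactly by the nonnegative correction term $I(X_i;X_T\mid X_S)$, and the main bookkeeping obstacle is verifying that this term further reduces to type ii elemental forms while the side conditions $i\neq j$ and $K\subseteq\mathcal{N}_n-\{i,j\}$ continue to hold at every stage.
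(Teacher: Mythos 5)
The paper itself offers no proof of this theorem—it is quoted from the cited reference \cite{Yeung1997}—so your proposal must be judged against the standard argument given there, which it largely follows: chain rules to reduce everything to single-variable measures, plus the completion identity $H(X_i|X_S)=H(X_i|X_{\mathcal{N}_n-\{i\}})+I(X_i;X_T|X_S)$. Your treatment of the disjoint case is correct: the double chain rule produces only unit coefficients, each summand $I(X_{g_a};X_{g'_b}|X_{L_{a,b}})$ satisfies $g_a\neq g'_b$ and $L_{a,b}\subseteq\mathcal{N}_n-\{g_a,g'_b\}$, and the correction term in the completion identity is a disjoint single-versus-group mutual information that your first decomposition already handles.

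The gap is your opening reduction. You assert that it suffices to treat pairwise disjoint $G,G',G''$, but the justification you offer (taking $G''=\emptyset$ recovers joint entropy and unconditional mutual information) addresses conditioning, not overlap. In this framework Shannon's information measures allow arbitrary overlapping index sets—the paper itself writes the general form $I(X_G;X_{G'}|X_{G''})$ with arbitrary $G,G',G''\subseteq\mathcal{N}_n$—so a measure such as $I(X_1,X_2;X_2,X_3|X_4)$ falls under the theorem, yet your decomposition never produces it. The repair is standard and uses only tools you already have: first delete from $G$ and $G'$ any indices lying in $G''$ (variables appearing in the conditioning contribute nothing), then apply the identity
\begin{equation*}
I(X_G;X_{G'}|X_{G''})\;=\;H(X_{G\cap G'}|X_{G''})\;+\;I(X_{G-G'};X_{G'-G}|X_{(G\cap G')\cup G''}),
\end{equation*}
whose two terms, each with coefficient $+1$, are exactly your two cases: a conditional entropy and a conditional mutual information with disjoint arguments. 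With this identity inserted at the start, your argument closes and coincides with the proof in the cited reference.
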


The nonnegativity of the two elemental forms of Shannon’s information measures forms a proper but equivalent subset of the set of basic inequalities. The inequalities in this smaller set are called the {\it elemental inequalities}.
In \cite{Yeung1997}, the minimality of the elemental inequalities is also proved.
%that is, there is no redundant inequalities in the elemental inequalities.
%
The total number of elemental inequalities is equal to 
\begin{equation*}
	m=n+\sum\limits_{r=0}^{n-2}\left(
	\begin{array}{c}
		n\\
		r\\
	\end{array}
	\right)
	\left(
	\begin{array}{c}
		n-r\\
		2\\
	\end{array}
	\right)
	=n+
	\left(
	\begin{array}{c}
		n\\
		2\\
	\end{array}
	\right)  2^{n-2}  .
	\label{m}
\end{equation*}

In this paper, inequalities (identities) involving only Shannon's information measures
are referred to as information inequalities (identities). 
The elemental inequalities are called {\it unconstrained} information inequalities
because they hold for all joint distributions of the random variables.
In information theory, we very often deal with information inequalities (identities) that hold  under certain constraints on the joint distribution of the random variables. These are called {\it constrained} information inequalities (identities), and the associated constraints are usually  expressible as linear constraints on Shannon's information measures. We will 
confine our discussion to constrained inequalities of this type.

\begin{example}
	\def\ra{\rightarrow}
	The celebrated data processing theorem asserts that for any four random variables 
	$X$, $Y$, $Z$ and $T$, if $X \rightarrow Y \rightarrow Z \rightarrow T$ 
	forms a Markov chain, then
	$I(X;T) \le I(Y;Z)$. Here, $I(X;T) \le I(Y;Z)$ is a constrained information inequality under the constraint $X \rightarrow Y \rightarrow Z \rightarrow T$, which is equivalent to 
	\[
	\left\{
	\begin{array}{rcl}
		I(X;Z|Y) & = & 0 \\
		I(X,Y;T|Z) & = & 0 ,
	\end{array}
	\right.
	\]
	or 
	\[
	I(X;Z|Y) + I(X,Y;T|Z) = 0
	\]
	owing to the nonnegativity of conditional mutual information. Either way, the Markov
	chain can be expressed as a set of linear constraint(s) on Shannon's information measures.	
\end{example}

Information inequalities (unconstrained or constrained) that are implied by the 
basic inequalities are called {\it Shannon-type} inequalities. Most of the information 
inequalities that are known belong to this type. However, {\it non-Shannon-type} 
inequalities do exist, e.g., \cite{Zhang1998}. See \cite[Ch.~15]{Yeung2008} for a discussion.

Shannon's information measures, with conditional mutual information 
being the general form, can be expressed as a linear combination of joint entropies by means of following identity:
$$ I(X_G;X_{G'}|X_{G''})=H(X_G,X_{G''})+H(X_{G',G''})-H(X_G,X_{G'},X_{G''})-H(X_{G''}). $$
where $G,G',G''\subseteq \mathcal{N}_n$. 
For the random variables $X_1, X_2, \ldots, X_n$, there are a total of $2^n-1$ joint entropies.
By regarding the joint entropies as variables, the basic (elemental) inequalities
become linear inequality constraints in $\R^{2^n-1}$. By the same token, 
the linear equality constraints on 
Shannon's information measures imposed by the problem under discussion become
linear equality constraints in $\R^{2^n-1}$. This way, the problem of verifying 
a (linear) Shannon-type inequality can be formulated as a linear program (LP),
which is described next.

Let $\bf{h}$ be the column $(2^n-1)$-vector of the joint entropies 
of $X_1, X_2, \ldots, X_n$. The set of elemental inequalities can be written as 
${\bf G h} \ge 0$, where $\bf{G}$ is an $m \times (2^n-1)$ matrix and $ {\bf G h} \ge 0$ means
all the components of ${\bf G h}$ are nonnegative. Likewise, the constraints on 
the joint entropies can be written as ${\bf Q h} = 0$. When there is no constraint on the joint entropies, $\bf{Q}$ is assumed to contain zero rows. The following theorem enables
a Shannon-type inequality to be verified by solving an LP.

\begin{theorem} {\rm \cite{Yeung1997}}
	${\bf b}^\top {\bf h} \ge 0$ is a Shannon-type inequality under the constraint $ {\bf Q h} = 0$ 
	if and only if the minimum of the problem
	\begin{center}
		Minimize ${\bf b}^\top {\bf h}$, subject to ${\bf G h} \ge 0$ and ${\bf Q h} = 0$
	\end{center}
	is zero.  
%Here, $\mathbf{h}\in\mathbb{R}^{2^n-1}$ is the variable vector.
	\label{LP-S}
\end{theorem}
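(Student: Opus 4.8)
The plan is to read the statement as a direct consequence of the fact that the feasible region of the stated program is a polyhedral cone on which $\mathbf{b}^\top\mathbf{h}$ acts as a homogeneous linear functional. Writing $\Gamma=\{\mathbf{h}\in\R^{2^n-1}:\mathbf{G}\mathbf{h}\ge 0,\ \mathbf{Q}\mathbf{h}=0\}$, I would first record two structural facts: $\Gamma$ is closed under nonnegative scaling (it is cut out by homogeneous inequalities and equalities), and $\mathbf{0}\in\Gamma$. The latter already forces the optimal value to satisfy $\min\le \mathbf{b}^\top\mathbf{0}=0$, so the program is never strictly positive, regardless of $\mathbf{b}$.

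Next I would unpack the definition of a Shannon-type inequality. By the definition given above, $\mathbf{b}^\top\mathbf{h}\ge 0$ being Shannon-type under the constraint $\mathbf{Q}\mathbf{h}=0$ means exactly that it is implied by the elemental inequalities together with the constraints, i.e. that $\mathbf{b}^\top\mathbf{h}\ge 0$ holds for every $\mathbf{h}\in\Gamma$. With this reading, the two directions of the equivalence become transparent. If the inequality is Shannon-type then $\mathbf{b}^\top\mathbf{h}\ge 0$ on all of $\Gamma$, so $\min\ge 0$; combined with $\min\le 0$ from the point $\mathbf{0}$, this gives $\min=0$. Conversely, if $\min=0$ then every feasible $\mathbf{h}$ satisfies $\mathbf{b}^\top\mathbf{h}\ge\min=0$, which is precisely the assertion that the inequality is Shannon-type.

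The point that deserves the most care --- and the reason the LP test is meaningful --- is the dichotomy for the optimal value. I would argue that, because $\Gamma$ is a cone, $\mathbf{b}^\top\mathbf{h}$ cannot attain a finite negative minimum: if some $\mathbf{h}_0\in\Gamma$ had $\mathbf{b}^\top\mathbf{h}_0<0$, then $\lambda\mathbf{h}_0\in\Gamma$ for all $\lambda\ge 0$ and $\mathbf{b}^\top(\lambda\mathbf{h}_0)=\lambda\,\mathbf{b}^\top\mathbf{h}_0\to-\infty$. Hence the program is either optimal with value $0$ or unbounded below, and checking whether the minimum equals zero is exactly checking whether the inequality is Shannon-type. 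I expect this scaling argument to be the only genuinely nontrivial step; the remainder is routine.

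Finally, I would remark that if one instead prefers the syntactic notion of being implied by the basic inequalities --- namely that $\mathbf{b}$ admits an expression $\mathbf{b}=\mathbf{G}^\top\boldsymbol{\lambda}+\mathbf{Q}^\top\boldsymbol{\mu}$ with $\boldsymbol{\lambda}\ge 0$ --- the equivalence with the region formulation used above is supplied by LP duality (Farkas' lemma). A dual optimal solution at value $0$ furnishes exactly such a certificate, which is the explicit conic-combination proof that the later sections construct by elimination.
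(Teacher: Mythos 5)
Your proof is correct. The paper itself gives no proof of Theorem~\ref{LP-S} (it is quoted from \cite{Yeung1997}), and your argument --- the feasible set $\Gamma=\{\mathbf{h}:\mathbf{G}\mathbf{h}\ge 0,\ \mathbf{Q}\mathbf{h}=0\}$ is a cone containing the origin, hence the LP value is either $0$ or $-\infty$, and ``Shannon-type under $\mathbf{Q}\mathbf{h}=0$'' means precisely that $\mathbf{b}^\top\mathbf{h}\ge 0$ on all of $\Gamma$ --- is exactly the standard proof in that cited reference, with your closing Farkas/duality remark corresponding to the conic-combination characterization the paper records as Lemma~\ref{GG}.
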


%%%------------------------------------------------------------------------------------------------------------------------------------------
\section{Algorithms for homogeneous linear inequlities}
\label{Algorithms}

In this section, we will develop new algorithms for proving information inequalities and identities. We will start by discussing some notions pertaining to linear inequality sets and linear equality sets. Then we will state some related properties that are necessary for developing these algorithms. For details, one can refer to \cite{Schrijver2003,Ben-Tal-Nemirovski2022}.

Let $\mathbf{x}=(x_1,x_2,\ldots,x_n)^T$, and let $\R_h[\x]$ be the set of all homogeneous linear polynomials in $\x$ with real coefficients.
In this paper, unless otherwise specified, we assume that all inequality sets have the form $S_f=\{f_i\geq0,i\in\mathcal{N}_m\}$, with $f_i\neq0$ and $f_i\in\mathbb{R}_h[\mathbf{x}]$, and all  equality sets have the form $E_{\tilde{f}}=\{\tilde{f}_i=0,i\in\mathcal{N}_{\widetilde{m}}\}$ with $\tilde{f}_{i}\neq0$ and $\tilde{f}_i\in\mathbb{R}_h[\mathbf{x}]$.

For a given set of polynomials $P_f=\{f_i,i\in\mathcal{N}_m\}$ and the corresponding set of inequalities $S_f=\{f_i\geq0,i\in\mathcal{N}_m\}$, and a given set of polynomials $P_{\tilde{f}}=\{\tilde{f}_i,i\in\mathcal{N}_{\widetilde{m}}\}$ and the corresponding set of equalities $E_{\tilde{f}}=\{\tilde{f}_i=0,i\in\mathcal{N}_{\widetilde{m}}\}$, where $f_i$ and $\tilde{f}_i$ are polynomials in $\mathbf{x}$, we write
%{\color{red}
	$S_f=\mathcal{R}(P_f)$, $P_f=\mathcal{R}^{-1}(S_f)$,
	$E_{\tilde{f}}=\widetilde{\mathcal{R}}(P_{\tilde{f}})$ and $P_{\tilde{f}}=\widetilde{\mathcal{R}}^{-1}(E_{\tilde{f}})$.
	% 
	%{\color{blue}
		%Let Span$_{\mathbb{R}}(P_f)$ be an $\mathbb{R}-$linear vector space generated by $P_f$.}
	
	\begin{define}\label{subset}
		Let $S_f=\{f_i\geq0,i\in\mathcal{N}_m\}$ and $S_{f'}=\{f'_i\geq0,i\in\mathcal{N}_{m'}\}$ be two inequality sets, and $E_{\tilde{f}}$ and $E_{\tilde{f}'}$ be two equality sets.  We write $S_{f'}\subseteq S_{f}$ if $\mathcal{R}^{-1}(S_{f'})\subseteq \mathcal{R}^{-1}(S_{f})$, and $E_{\tilde{f}'}\subseteq E_{\tilde{f}}$ if $\widetilde{\mathcal{R}}^{-1}(E_{\tilde{f}'})\subseteq \widetilde{\mathcal{R}}^{-1}(E_{\tilde{f}})$. 
		Furthermore, we write $(f_i\geq0) \in S_f$ to mean that the inequality $f_i\geq0$ is in $S_f$.
	\end{define}

	%\begin{define}
	%Let $\mathbb{N}_{>0}=\{1,2,\ldots\}$. For $a_i\in \mathbb{N}_{>0}, i\in\mathcal{N}_m$, a sequence $[a_1,a_2,...,a_{m}]$ is said to be in {\it descending order} if $a_1\geq a_2\geq\cdots\geq a_{m}$.
	%\end{define}
	
	%%%%------------------------------------------------------------------------------------------------------------------------------------
	
	\begin{define}
		Let $\mathbb{R}_{>0}$ and $\mathbb{R}_{\geq0}$ be the sets of positive and nonnegative real numbers, respectively.
		A linear polynomial $F$ in $\mathbf{x}$ is called a {\it positive (nonnegative) linear combination} of polynomials $f_j$ in $\mathbf{x}$, $j=1,\ldots,m$, if $F=\sum_{j=1}^{m}r_jf_j$ with $r_j\in$ $\mathbb{R}_{>0}$ $(r_j\in$ $\mathbb{R}_{\geq0})$.
		%\footnote{Here, the set of positive real numbers is denoted by $\mathbb{R}_{>0}$, and the set of nonnegative real numbers is denoted by $\mathbb{R}_{\geq0}$}.
		%
		A nonnegative linear combination is also called a {\it conic} combination.
	\end{define}

\begin{define}
	The inequalities $f_1\geq0,f_2\geq0,\ldots,f_m\geq0$ imply the inequality $f\geq0$ if the following holds:
	\begin{center}
	For all $\mathbf{x}\in\mathbb{R}^n$, $\mathbf{x}$ satisfying $f_1\geq0,f_2\geq0,\ldots,f_m\geq0$ implies $\mathbf{x}$ satisfies $f\geq0$.
	\end{center}
	
\end{define}

\begin{define}\label{def-redundant}
	Given a set of inequalities $S_f=\{f_i\geq0,i\in\mathcal{N}_m\}$, for $i\in\mathcal{N}_m$, $f_i\geq0$ is called a redundant inequality if $f_i\geq0$ is implied by the inequalities $f_j\geq0$, where $j\in \mathcal{N}_m\backslash\{i\}$.
	%or equivalently, $f_i$ is a conic combination of $f_j\in \mathcal{R}^{-1}[S_f],\ j\neq i$.
	
\end{define}

%\begin{define}
%	Let $f({\bf x})$ and $g({\bf x})$ be two homogeneous linear %polynomials. We say $f({\bf x})\equiv g({\bf x})$ if and ony if
%	$f({\bf x})= g({\bf x})$ for any ${\bf x}\in\mathbb{R}^n$.
%\end{define}

\begin{define}\label{def-purein}
	Let $S_f=\{f_i(\mathbf{x})\geq0,i\in\mathcal{N}_{m}\}$ be an inequality set.
	%If $f_{k}(\mathbf{x})$ satisfies the condition in  Lemma \ref{equa1}, 
	If $f_{k}(\mathbf{x})=0$ for all solutions $\mathbf{x}$ of $S_f$,
	then $f_{k}(\mathbf{x})=0$ is called an {\it implied equality} of $S_f$. The inequality set $S_f$ is called a {\it pure inequality set} if $S_f$ has no implied equalities.
\end{define}

%{\color{blue}
	\begin{lemma}{\rm \cite{Guo-Yeung-Gao2023}}
		\label{equa1}
		Let $S_f=\{f_i(\mathbf{x})\geq0,i\in\mathcal{N}_{m}\}$ be an inequality set.
		Then $f_k=0$ is an implied equality of $S_f$ if and only if 
		\begin{equation}
			f_{k}(\mathbf{x})\equiv\sum\limits_{i=1,i\neq k}^{m}p_{i}f_{i}(\mathbf{x}),
			\label{qqoinva}
		\end{equation}
		where $p_i \le 0$ for all $i \in {\cal N}_m \backslash \{k\}$.
		
		%In other words, $f_k$ is an implied equality of $S_f$ if and only if $-f_k$ is a conic combination of $f_i$, $i\neq k$.
	\end{lemma}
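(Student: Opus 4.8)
The plan is to prove the equivalence by treating the two implications separately, exploiting the fact that the solution set of $S_f$ is the polyhedral cone $P=\{\mathbf{x}\in\R^n: f_i(\mathbf{x})\geq0,\ i\in\mathcal{N}_m\}$, together with the standard result from the theory of linear inequalities (the homogeneous Farkas lemma) that a homogeneous linear inequality holds throughout $P$ exactly when it is a conic combination of the defining inequalities.

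The reverse direction is a direct computation that I would handle first. Assuming $f_k\equiv\sum_{i\neq k}p_if_i$ with every $p_i\leq0$, I take an arbitrary $\mathbf{x}\in P$; then each summand $p_if_i(\mathbf{x})$ is nonpositive because $p_i\leq0$ and $f_i(\mathbf{x})\geq0$, so $f_k(\mathbf{x})\leq0$. Since $f_k\geq0$ is itself one of the constraints of $S_f$, we also have $f_k(\mathbf{x})\geq0$, forcing $f_k(\mathbf{x})=0$. As $\mathbf{x}\in P$ was arbitrary, $f_k=0$ is an implied equality of $S_f$.

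For the forward direction I would argue as follows. If $f_k=0$ is an implied equality then $f_k(\mathbf{x})=0$ for every $\mathbf{x}\in P$, so in particular the homogeneous inequality $-f_k\geq0$ holds throughout $P$. Applying the homogeneous Farkas lemma to this implied inequality yields nonnegative scalars $\lambda_1,\ldots,\lambda_m$ with $-f_k\equiv\sum_{i=1}^m\lambda_if_i$. Separating the $i=k$ term gives $-(1+\lambda_k)f_k\equiv\sum_{i\neq k}\lambda_if_i$, and since $1+\lambda_k>0$ I can divide through to obtain $f_k\equiv\sum_{i\neq k}p_if_i$ with $p_i=-\lambda_i/(1+\lambda_k)\leq0$, as required.

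I expect the forward direction to be the crux, and within it two points require care. First, one must correctly rephrase the implied-equality condition as the single implied inequality $-f_k\geq0$ so that Farkas' lemma applies; the companion inequality $f_k\geq0$ is free because it already belongs to $S_f$. Second, the conic combination produced by Farkas may carry a nonnegative multiple $\lambda_kf_k$ of $f_k$ itself, and the legitimacy of absorbing this term into the left-hand side hinges on $1+\lambda_k$ being strictly positive, which holds since $\lambda_k\geq0$. Once these are settled, the sign condition $p_i\leq0$ follows immediately from $\lambda_i\geq0$.
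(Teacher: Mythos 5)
Your proof is correct. Note that this paper does not actually prove Lemma~\ref{equa1}: it is quoted from the earlier work \cite{Guo-Yeung-Gao2023} and stated without proof, so there is no in-paper argument to compare against line by line. What you have produced is a clean, self-contained derivation from exactly the tool the paper places next to the lemma, namely Lemma~\ref{GG} (the homogeneous Farkas lemma for conic combinations), and both directions are handled soundly. The reverse direction correctly combines the nonpositive-coefficient representation with the fact that $f_k\geq0$ is itself a constraint of $S_f$ to force $f_k(\mathbf{x})=0$ on the solution set. In the forward direction you correctly reduce the implied equality to the single implied inequality $-f_k\geq0$ (the inequality $f_k\geq0$ being free), apply Lemma~\ref{GG} to get $-f_k\equiv\sum_{i=1}^{m}\lambda_i f_i$ with $\lambda_i\geq0$, and dispose of the possible $i=k$ term by moving $\lambda_k f_k$ to the left and dividing by $1+\lambda_k>0$; this normalization step is precisely the point where a careless argument would break, and you handle it explicitly. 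The hypotheses of Lemma~\ref{GG} are met because the paper's standing assumption puts every $f_i$ in $\mathbb{R}_h[\mathbf{x}]$, so the homogeneous (conic) form of Farkas applies. In short: the paper buys the result by citation; your argument supplies the missing proof using only machinery already stated in the paper.
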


\begin{lemma}{\rm \cite{Ben-Tal-Nemirovski2022}}
	\label{GG}
	Given $h_1,\ldots,h_m,{h}_0\in\mathbb{R}_h[\mathbf{x}]$,  $h_1\ge0,...,h_m\ge0$ imply ${h}_0\ge0$ if and only if ${h}_0$ is a conic combination of $h_1,\ldots,h_m$.
\end{lemma}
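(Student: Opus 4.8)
The plan is to recognize this statement as the homogeneous version of Farkas' lemma and to prove the two implications separately. Writing each homogeneous linear polynomial as an inner product, $h_j(\x)=\a_j^\top\x$ for $j=1,\dots,m$ and $h_0(\x)=\mathbf{c}^\top\x$, the claim becomes: the polyhedral cone $\{\x:\a_j^\top\x\ge0,\ j=1,\dots,m\}$ is contained in the halfspace $\{\x:\mathbf{c}^\top\x\ge0\}$ if and only if $\mathbf{c}$ lies in the cone generated by $\a_1,\dots,\a_m$.

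The ($\Leftarrow$) direction is immediate: if $h_0=\sum_{j=1}^m r_j h_j$ with each $r_j\ge0$, then for any $\x$ with $h_j(\x)\ge0$ for all $j$ we get $h_0(\x)=\sum_{j=1}^m r_j h_j(\x)\ge0$ as a sum of nonnegative terms, so the implication holds. This uses only the definition of a conic combination.

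For the ($\Rightarrow$) direction I would argue by contraposition via a separation argument. Let $K=\{\sum_{j=1}^m r_j\a_j:r_j\ge0\}$ be the convex cone generated by the $\a_j$, and suppose $\mathbf{c}\notin K$. Since $K$ is a closed convex set and $\mathbf{c}\notin K$, the separating hyperplane theorem yields a vector $\mathbf{y}$ with $\mathbf{y}^\top\mathbf{c}<\beta\le\mathbf{y}^\top\mathbf{z}$ for every $\mathbf{z}\in K$; using that $K$ is a cone (so $t\mathbf{z}\in K$ for all $t>0$ and $0\in K$) one upgrades this to $\mathbf{y}^\top\mathbf{c}<0\le\mathbf{y}^\top\mathbf{z}$ for every $\mathbf{z}\in K$, and in particular $\mathbf{y}^\top\a_j\ge0$ for all $j$. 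Thus $\mathbf{y}$ satisfies $h_j(\mathbf{y})\ge0$ for every $j$ while $h_0(\mathbf{y})=\mathbf{c}^\top\mathbf{y}<0$, so the hypothesized implication fails. Contrapositively, if the implication holds then $\mathbf{c}\in K$, i.e. $h_0$ is a conic combination of $h_1,\dots,h_m$, as claimed.

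The hard part is the step I glossed over, namely that the finitely generated cone $K$ is closed, which is exactly what legitimizes strict separation of the point $\mathbf{c}$ from $K$; without closedness the separation could be only non-strict and the argument would break. I would establish closedness either by appealing to the Minkowski--Weyl theorem (a finitely generated cone equals a finite intersection of halfspaces, hence is closed), or by a self-contained induction on $m$ together with Carath\'eodory's theorem for cones, which lets one assume each point of $K$ uses at most $n$ of the generators and then pass to a convergent limit. An entirely elementary alternative that sidesteps topology is to prove Farkas' lemma by Fourier--Motzkin elimination, successively eliminating the multipliers $r_j$ from the system $\mathbf{c}=\sum_{j=1}^m r_j\a_j,\ r_j\ge0$ to produce a finite certificate; this is more computational but avoids the closedness subtlety.
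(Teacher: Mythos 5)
Your proof is correct, but there is no proof in the paper to compare it with: the paper states this lemma with only a citation to Ben-Tal and Nemirovski, treating it as the known homogeneous form of Farkas' lemma. Your argument is the standard convex-analytic proof of that classical result: identify each homogeneous linear polynomial with its coefficient vector ($h_j(\x)=\a_j^\top\x$, $h_0(\x)=\mathbf{c}^\top\x$); the conic-combination direction is immediate; for the converse, strictly separate $\mathbf{c}$ from the finitely generated cone $K$ and upgrade the separating inequality, via $0\in K$ and positive scaling, to $\mathbf{y}^\top\mathbf{c}<0\le\mathbf{y}^\top\a_j$ for all $j$, so that $\mathbf{y}$ witnesses failure of the implication. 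You also correctly single out the one genuinely nontrivial ingredient---closedness of finitely generated cones---without which strict separation is unavailable, and the routes you propose for establishing it (Minkowski--Weyl, Carath\'eodory for cones together with a limit argument, or replacing the separation argument altogether by Fourier--Motzkin elimination) are all standard and adequate. In short, your writeup is a complete, self-contained substitute for the paper's citation; the authors' choice to cite rather than prove is equally defensible, since the lemma is classical and the paper's contribution lies in how it is used (notably in the proof of Theorem~\ref{Combofalpha}), not in the lemma itself.
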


\begin{define}
Let $f\in\R_h[\x]$ and $x_1\succ x_2\succ\cdots x_n$ be a fixed variable order. 
The variable set of $f$, denoted by $V(f)$, is the set containing all the variables of $f$.
The variable sequence of $f$, denoted by $\V(f)$, is the sequence containing all the variables of $f$ in the given order.
The coefficient sequence of $f$, denoted by $\C(f)$, is the sequence containing the coefficients corresponding to the variables in $\V(f)$.
We adopt the convention that $\C(f)=[0]$ and $V(f)=\emptyset$ for $f\equiv0$.
\end{define}

\begin{define}
Let $F$ be a polynomial in $\mathbf{x}$, then $|F|$ denotes the number of variables in $F$.
\end{define}

%Note that $\V(f)$ and $\C(f)$ are one to one correspondence.

\begin{define}
Let $P_f=\{f_i,\ i\in\mathcal{N}_m\}$, where $f_i\in\R_h[\x]$. The variable set of $P_f$, denoted by $V(P_f)$, is the set containing all the variables of $f_i$'s, i.e., 
$V(P_f)=\cup_{i\in\mathcal{N}_m}V(f_i)$.
\end{define}

\begin{example}
Let $P_f=\{f_1,f_2\}$, where $f_1=x_1+x_2,\ f_2=x_1-x_3$. Then,  we have 

$$
V(f_1)=\{x_1,x_2\},\ \V(f_1)=[x_1,x_2],\ \C(f_1)=[1,1],\  V(f_2)=\{x_2,x_3\},\ {\rm and}\ V(P_f)=\{x_1,x_2,x_3\}.
$$

\end{example}

Observe that for any polynomial $f(\x)$, the following equality holds:

$$\{\x: f(\x)\ge0\}={\rm Proj}_{\x}\{(\x,a): f(\x)-a=0, a\ge0\}.$$

\noindent Note that on the RHS, a new variable $a$ is introduced. Motivated by this observation, in the sequel we will say that an inequality $f(\x)\ge0$ is equivalent to the semi-algebraic set $\{f(\x)-a=0,\ a\ge0\}$.
Also, $\{f_i(\x)\ge0, i\in\mathcal{N}_m\}$ is equivalent to $\{f_i(\x)-a_i=0,\ a_i\ge0, i\in\mathcal{N}_m\}$.

The following proposition is well known (see for example \cite[Chapter 1]{Lay2016}).
%{\color{red}
	\begin{proposition}\label{rsb-def}
		Under the variable order $x_1\succ x_2\succ \cdots\succ x_n$,
		the linear equation system  $E_{\tilde{f}}=\{\tilde{f}_i=0,\ i\in\mathcal{N}_{\widetilde{m}}\}$
		%$E= \widetilde{\mathcal{R}}(P_{\tilde{f}})$ 
		can be reduced by the Gauss-Jordan elimination to the unique form 
		\begin{equation}\label{Sta-Basis}
			\widetilde{E}=\{x_{k_i}-U_i=0,i\in \mathcal{N}_{\widetilde{n}}\},
		\end{equation}
		where $\widetilde{n}$ is the rank of the linear system $E_{\tilde{f}}$, $k_1<k_2<\cdots<k_{\widetilde{n}}$, $x_{k_i}$ is the leading term of $x_{k_i}-U_i$, and $U_i$ is a linear function in 
		$\{x_j, \hbox{ for }k_i<j\le n,\ \ j\neq k_{l},\ i<l\le \tilde{n}\}$, with $k_{\tilde{n}+1}=n+1$ by convention. 
	%	Furthermore, $\sum_{i\in\mathcal{N}_{\tilde{n}}}|U_i|=n-\tilde{n}$.

		Among $x_1,x_2,\ldots,x_n$, the variable $x_{k_i},\ i\in\mathcal{N}_{\tilde{n}}$ are called the pivot variables, and the rest are called the free variables. 
%Here, $\widetilde{E}$ is called the reduced row echelon form (RREF) of $E_{\tilde{f}}$, denoted by $\mathcal{G}[E_{\tilde{f}}]=\widetilde{E}$.
	\end{proposition}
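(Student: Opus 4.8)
The plan is to recognize this as the classical existence-and-uniqueness theorem for the reduced row echelon form (RREF), and to check that the three structural assertions --- the ordering $k_1<k_2<\cdots<k_{\widetilde{n}}$, the support condition on each $U_i$, and the identification of $\widetilde{n}$ with the rank --- are exactly the defining properties of the RREF read off in the variable order $x_1\succ x_2\succ\cdots\succ x_n$. First I would encode the system $E_{\tilde{f}}$ as the homogeneous system $A\mathbf{x}=0$, where the $i$-th row of the $\widetilde{m}\times n$ matrix $A$ is the coefficient sequence $\C(\tilde{f}_i)$ and the columns are indexed by $x_1,\ldots,x_n$ from left to right. Gauss--Jordan elimination amounts to left multiplication of $A$ by a product of elementary matrices, i.e.\ by an invertible matrix $M$, and it does not change the solution set, so $A\mathbf{x}=0$ and $(MA)\mathbf{x}=0$ are the same system.

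For existence and the structural claims, I would run Gauss--Jordan elimination to bring $A$ to its reduced form $R=MA$ and discard the zero rows. Let $\widetilde{n}$ be the number of nonzero rows of $R$ and let $k_i$ be the column of the leading (pivot) entry of the $i$-th nonzero row. The echelon property forces $k_1<k_2<\cdots<k_{\widetilde{n}}$ and guarantees that every entry of row $i$ in a column $j<k_i$ vanishes; the reduced property normalizes each pivot entry to $1$ and makes every pivot column a standard basis vector, so the entry of row $i$ in any column $k_l$ with $l\neq i$ is $0$. Reading row $i$ back as an equation therefore yields $x_{k_i}-U_i=0$, where $U_i$ collects the remaining entries and is hence supported only on the non-pivot columns $j>k_i$ --- exactly the free variables $\{x_j:k_i<j\le n,\ j\neq k_l,\ i<l\le\widetilde{n}\}$. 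Since every such variable is lower than $x_{k_i}$ in the order, $x_{k_i}$ is the leading term of $x_{k_i}-U_i$. Finally $\widetilde{n}$ is the number of pivots, which equals $\mathrm{rank}(A)$, the rank of the linear system.

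The substantive part is uniqueness: the reduced form must not depend on the particular sequence of elimination steps. The key observation is that for any $M$ arising from row operations, a relation $\sum_j c_j A_{\cdot j}=0$ among the columns holds if and only if $\sum_j c_j R_{\cdot j}=0$, because $M$ is invertible; thus $A$ and every reduced form share the same linear dependence relations among their columns. In any RREF the pivot columns are precisely those columns that are \emph{not} linear combinations of the columns to their left (each pivot column is a standard basis vector not lying in the span of the earlier ones), and this left-to-right independence pattern is determined entirely by the column dependence relations. Hence any two reduced forms of $A$ have the same pivot set, forcing $\widetilde{n}$ and each $k_i$ to agree. For a free column $j$, its expression as a combination $A_{\cdot j}=\sum_{i:\,k_i<j} c_{ij}A_{\cdot k_i}$ of the preceding pivot columns is dictated by the same dependence relation, so the coefficients $c_{ij}$ are identical across reduced forms; these are exactly the coefficients appearing in the $U_i$. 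Consequently $\widetilde{E}$ is unique.

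I expect uniqueness to be the only real obstacle. Existence and the support description of $U_i$ are immediate bookkeeping of the echelon and reduced conditions, and the rank identity is standard. The delicate point is arguing, cleanly and without circularity, that the pivot positions are an intrinsic invariant of $A$; I would phrase this through the invariance of column dependence relations under the invertible transformation $M$, which simultaneously pins down the pivots and the free-variable coefficients and thereby yields both the uniqueness and the precise form in \eqref{Sta-Basis}.
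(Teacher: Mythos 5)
Your proof is correct. Note, however, that the paper does not prove this proposition at all: it is stated as a well-known fact with a citation to Lay's \emph{Linear Algebra and Its Applications} (Chapter 1), so there is no internal argument to compare against. What you have written is precisely the classical existence-and-uniqueness proof for the reduced row echelon form that the citation points to: existence and the structural claims (pivot ordering $k_1<\cdots<k_{\widetilde{n}}$, support of each $U_i$ on the lower-order free variables, $\widetilde{n}=\mathrm{rank}$) follow by bookkeeping from the echelon and reduced conditions, and uniqueness follows from the invariance of column dependence relations under left multiplication by the invertible matrix $M$, which intrinsically determines the pivot set and the free-column coefficients. Your treatment of the one genuinely delicate point --- that pivot positions are an invariant of $A$ rather than of the elimination sequence --- is the standard resolution and is argued without circularity, so the proposal would serve as a complete substitute for the citation.
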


We call the equality set $\widetilde{E}$ the reduced row echelon form (RREF) of $E_{\tilde{f}}$. Likewise, we call the polynomial set $\widetilde{\mathcal{R}}^{-1}(\widetilde{E})$ the RREF of  $\widetilde{\mathcal{R}}^{-1}(E_{\tilde{f}})$.
We say applying the Gauss-Jordan elimination to $\widetilde{\mathcal{R}}^{-1}(E_{\tilde{f}})$ to mean finding $\widetilde{\mathcal{R}}^{-1}(\widetilde{E})$ by Proposition~\ref{rsb-def}.

\begin{define}
Let $H=\{h_i,i\in\mathcal{N}_{m}\}$ be a set of polynomials, where $h_i\in\R_h[\b]$ and $\b=(x_1,\ldots,x_n,a_1,\ldots,a_m)^T$. 
Under the variable order $x_1\succ\cdots\succ x_n\succ a_1\succ\cdots\succ a_m$, we can obtain the RREF of $H$, denoted by $\widetilde{H}$. 
Let $\widetilde{H}=H_1\cup H_2$, where 

$V(h)\cap\{x_1,x_2,\ldots,x_n\}\neq\emptyset$ for every $h\in H_1$,

$V(h)\cap\{x_1,x_2,\ldots,x_n\}=\emptyset$ and $V(h)\cap\{a_1,a_2,\ldots,a_m\}\neq\emptyset$ for every $h\in H_2$.

\noindent $H_1$ is called the partial RREF of $H$ in $\x$ and $\a$, and $H_2$ is called the partial RREF of $H$ in $\a$.
\end{define}

\begin{algorithm}[H]
	\caption{Dimension Reduction}
	\label{division-algorithm}
	\begin{algorithmic}[1]
		
		\REQUIRE $S_f$, $E_{\tilde{f}}$.
		\ENSURE The remainder set $R_f$.
		\STATE Compute $\widetilde{E}$ for $E_{\tilde{f}}$ by Proposition \ref{rsb-def}.
		\STATE Substitute $x_{k_i}$ by $U_i$ in $\mathcal{R}^{-1}(S_f)$ to obtain the set $R$.
		\STATE Let $R_f=R\backslash\{0\}$. 
		
		\RETURN ${\cal R}(R_f)$.
	\end{algorithmic}
\end{algorithm}

We say reducing $S_f$ by $E_{\tilde{f}}$ to mean using Algorithm~\ref{division-algorithm} to find ${\cal R}(R_f)$.
We also say reducing $P_f$ by $E_{\tilde{f}}$ to mean using Algorithm~\ref{division-algorithm} to find $R_f$, called {\it the remainder set} (or remainder if $R_f$ is a singleton).

\begin{define}
	Let $E_{\widetilde{f}}=\{\widetilde{f}_i=0,i\in\mathcal{N}_{\widetilde{m}}\}$ and $E_{f'}=\{f'_i=0,i\in\mathcal{N}_{m'}\}$ be two equality sets, where $\tilde{f}_i,\ f'_i\in\mathbb{R}_h[\mathbf{x}]$.
	If the solution sets of  $E_{f'}$ and $E_{\widetilde{f}}$ are the same,  then we say that $E_{\widetilde{f}}$ and $E_{f'}$ are equivalent.\footnote{With a slight of abuse of terminology, the solution set of $E_{\tilde{f}}$ refers to the set $\{(x_1,x_2,\ldots,x_n)\in\mathbb{R}^n:\tilde{f}_i=0,\ i\in\mathcal{N}_{\widetilde{m}}\}$.}
\end{define}

\begin{define}\label{transf.}
	Let $h_i\in\R_h[\bm{a}]$, $i=1,2$, where $\a=(a_1,\ldots,a_m)^T$ and let $E_{\tilde{f}}=\{\tilde{f}_i=0,i\in\mathcal{N}_{\widetilde{m}}\}$ be an equality set,
	where $\tilde{f}_i\in\R_h[\bm{a}]$ for all $i\in\mathcal{N}_{\widetilde{m}}$.
	 We say $h_1$ can be transformed to $h_2$ by $E_{\tilde{f}}$ 
	 %with the linear transformation term (LTT) $h_3$ 
	 if $h_1\equiv h_2+h_3$, where $h_3\equiv\sum\limits_{i=1}^{m'}{q}_if'_i$, $q_i\in\R$ and $E_{f'}=\{f'_i=0,i\in\mathcal{N}_{m'}\}$ is an equivalent set of $E_{\tilde{f}}$.
	
\end{define}

Let $F_0\in\R_h[\x]$ and $S_f=\{f_i\ge0,i\in\mathcal{N}_m\}$, where  $f_i\in\R_h[\x]$. In the rest of this section, we discuss how to solve the following problem.

\begin{problem}\label{problem-p4}
	Prove $F_0\ge0$ subject to $S_f$.
\end{problem}

We first give a method implemented by the following algorithm for reducing Problem \ref{problem-p4} to another LP.
\\[0.2cm]
\begin{breakablealgorithm}
	\caption{LP reduction Algorithm}
	\label{Algorithm-transf.}
	\begin{algorithmic}[1]
		\REQUIRE \ Problem \ref{problem-p4}
		\ENSURE \  A reduced LP.
		\STATE Let $G_i=f_i-a_i$,  $i\in \mathcal{N}_m$, where $a_i$'s are assumed to satisfy $a_i\ge0,\ i\in \mathcal{N}_m$.
		\STATE Fix the variable order $x_1\succ x_2\succ\cdots\succ x_n\succ a_1\succ\cdots\succ a_m$.
		\STATE Apply the Gauss-Jordan elimination to $\{G_i,i\in\mathcal{N}_{m}\}$ and obtain the RREF. 
		\STATE Let $J_0$ be the partial RREF of $\{G_i,i\in\mathcal{N}_{m}\}$ in $\x$ and $\a$, and $J_1$ be the partial RREF of $\{G_i,i\in\mathcal{N}_{m}\}$ in $\a$.
		\STATE Reduce $F_0$ by $J_0$ to obtain $F$. 
%		\IF{$V(F_1)\cap\{x_1,\ldots,x_n\}\neq\emptyset$}
%		\STATE Declare the argument ``the Problem \ref{problem-p4} can be solved” is `FALSE'.
%		\ELSE
	    \STATE The Problem \ref{problem-p4} is equivalent to 
		\begin{problem}\label{problem-p5}
			Prove $F\ge0$ subject to $\widetilde{\mathcal{R}}(J_1)$ and $a_i\ge0,\ i\in\mathcal{N}_{m}$.
		\end{problem}
%		\ENDIF
		\RETURN Problem \ref{problem-p5}.
	\end{algorithmic}
\end{breakablealgorithm}
\ 
\\[0.1cm]

\begin{remark}
In Algorithm \ref{Algorithm-transf.}, if Problem \ref{problem-p4} can be solved, then $F$ needs to satisfy $V(F)\cap\{x_1,\ldots,x_n\}=\emptyset$.
If there exist $x_i\in V(F)$, then $x_i$ is a free variable in Problem \ref{problem-p5}, and Problem \ref{problem-p5} cannot be solved. Thus Problem \ref{problem-p4} cannot be solved. For example, we consider the problem

\textbf{P1}: Prove $x_1+x_3\ge0$ subject to $x_1\ge0$ and $x_2\ge0$. 

\noindent Running Algorithm 2, the above problem becomes

\textbf{P2}: Prove $a_1+x_3\ge0$ subject to $a_1\ge0$.

\noindent Obviously, \textbf{P2} cannot be proved since $x_3$ is a free variable.

\end{remark}

Let $\bm{a}=(a_1,\ldots,a_{m})^T$, $F\in\R_h[\bm{a}]$, $f_i\in\R_h[\bm{a}]$ for $i\in\mathcal{N}_{\widetilde{m}}$,
$S_{a}=\{a_i\ge0,\ i\in\mathcal{N}_{m}\}$, and $E_{a}=\{f_i=0,\ i\in\mathcal{N}_{\widetilde{m}}\}$. 
Based on the discussion above, we only need to consider the case that $F$ satisfies $V(F)\cap \{x_1,\ldots,x_n\}=\emptyset$.

To facilitate the discussion, we restate Problem \ref{problem-p5} in a general form:

\begin{problem}\label{problem-op}
	Prove $F\ge0$ subject to $E_{a}$ and $S_{a}$.
\end{problem}

\noindent We say that a problem as given in Problem \ref{problem-op} is ``solvable" if $F\ge0$ is implied by $E_a$ and $S_a$.

\begin{theorem}\label{Combofalpha}
Problem \ref{problem-op} is solvable if and only if $F$ can be transformed into a conic combination of $a_i, i\in\mathcal{N}_m$ by $E_{a}$.
\end{theorem}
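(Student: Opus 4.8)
The plan is to prove both directions of the equivalence, with the reverse (sufficiency) direction being essentially immediate and the forward (necessity) direction carrying the real content. For the reverse direction, suppose $F$ can be transformed into a conic combination $\sum_{i} r_i a_i$ with $r_i \in \mathbb{R}_{\ge 0}$ by $E_a$. By Definition~\ref{transf.}, this means $F \equiv \sum_i r_i a_i + h_3$, where $h_3 \equiv \sum_j q_j f'_j$ for some equivalent set $E_{f'}$ of $E_a$ and $q_j \in \mathbb{R}$. On every $\mathbf{a}$ satisfying $E_a$ (equivalently $E_{f'}$), each $f'_j = 0$, so $h_3 = 0$ and hence $F = \sum_i r_i a_i$. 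Since $S_a$ forces every $a_i \ge 0$ and each $r_i \ge 0$, the right-hand side is nonnegative, so $F \ge 0$. Thus the problem is solvable.

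For the forward direction, I would argue as follows. Suppose Problem~\ref{problem-op} is solvable, i.e., $F \ge 0$ is implied by $E_a$ and $S_a$. The idea is to eliminate the equality constraints by restricting to the solution variety of $E_a$, thereby reducing to a situation governed by Lemma~\ref{GG}. Concretely, first reduce $E_a$ to its RREF $\widetilde{E}$ via Proposition~\ref{rsb-def}, expressing each pivot variable $a_{k_i}$ as a linear function $U_i$ of the free variables; call the free variables $\mathbf{a}'$. Substituting these relations into $F$ produces a polynomial $\widehat{F}$ purely in the free variables $\mathbf{a}'$, and similarly each constraint $a_i \ge 0$ becomes a homogeneous linear inequality $\ell_i(\mathbf{a}') \ge 0$ in the free variables (the pivot inequalities $a_{k_i} = U_i \ge 0$ become inequalities on $\mathbf{a}'$, while the free-variable inequalities remain). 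On this reduced space, the hypothesis says exactly that $\ell_1(\mathbf{a}') \ge 0, \ldots, \ell_m(\mathbf{a}') \ge 0$ imply $\widehat{F}(\mathbf{a}') \ge 0$. By Lemma~\ref{GG}, $\widehat{F}$ is then a conic combination of the $\ell_i$, say $\widehat{F} \equiv \sum_i r_i \ell_i$ with $r_i \ge 0$.

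The remaining work is to lift this conic representation on the free variables back to a transformation of the original $F$ in the sense of Definition~\ref{transf.}. Here I would write $F - \sum_i r_i a_i$ and show it lies in the linear span of $\widetilde{\mathcal{R}}^{-1}(\widetilde{E})$, i.e., it is of the form $\sum_j q_j f'_j$ for the equivalent equality set $E_{f'} = \widetilde{E}$. This should follow because $F$ and $\sum_i r_i a_i$ agree after substituting the RREF relations (that is precisely what $\widehat{F} \equiv \sum_i r_i \ell_i$ encodes), so their difference vanishes modulo the ideal generated by the $\{a_{k_i} - U_i\}$; since everything is homogeneous and linear, "vanishing modulo" means being an exact linear combination of the generators. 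This exhibits $F$ as transformable into the conic combination $\sum_i r_i a_i$ by $E_a$, completing the necessity direction.

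The main obstacle I anticipate is the bookkeeping in this last lifting step, specifically verifying that the coefficients $r_i$ obtained for the reduced inequalities $\ell_i(\mathbf{a}') \ge 0$ can be attached to the original variables $a_i \ge 0$ cleanly, so that the leftover term is genuinely a linear combination of the equalities in an \emph{equivalent} set $E_{f'}$ rather than merely some algebraic consequence. The subtlety is that substitution of pivot variables mixes the pivot inequalities into the free-variable inequalities, so one must track carefully that the conic coefficients on the $\ell_i$ translate back to \emph{nonnegative} coefficients on the $a_i$ themselves; ensuring nonnegativity is preserved under this correspondence, and confirming that the residual is captured exactly by Definition~\ref{transf.}'s notion of transformation via an equivalent equality set, is where the argument requires care.
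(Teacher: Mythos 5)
Your proposal is correct and follows essentially the same route as the paper's proof: reduce $E_a$ to its RREF via Proposition~\ref{rsb-def}, substitute the pivot variables into $F$ and into $S_a$, apply Lemma~\ref{GG} to the resulting system in the free variables, and lift the conic combination back by observing that the residual $F-\sum_i r_i a_i$ is an exact linear combination of the RREF polynomials $a_{k_i}-U_i$, as Definition~\ref{transf.} requires. The obstacle you flag in the lifting step is benign—the coefficients $r_i$ attach to the $a_i$ unchanged, which is precisely what the paper verifies by its explicit computation.
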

\begin{proof}
The sufficiency is obvious. We only need to prove the necessity.

Assume that Problem \ref{problem-op} is solvable. By Proposition \ref{rsb-def}, we compute the RREF of $E_{a}$, denoted by $\widetilde{E}=\{a_{k_i}-U_i=0,\ i\in\mathcal{N}_{\widetilde{n}}\}$, and substitute $a_{k_i}$ in $F$ by $U_i$ to obtain $F_1$. Then we see that

\begin{equation}\label{gau-1}
F\equiv F_1+\sum\limits_{i=1}^{\widetilde{n}}{q}_i(a_{k_i}-U_i),\ q_i\in\R.
\end{equation}

\noindent In other words, $F$ can be transformed to $F_1$ by $E_{a}$.
Then we substitute $a_{k_i}$ in $S_{a}$ by $U_i$ to obtain 
$S_o=\{g_i\ge0,\ i\in\mathcal{N}_{m}\}$, 
where $g_{k_i}=U_i$ for $i\in\mathcal{N}_{\widetilde{n}}$
and $g_j=a_j$ for $j\in\mathcal{N}_{m}\backslash\{k_i, i\in\mathcal{N}_{\widetilde{n}}\}$. 

Now Problem \ref{problem-op} is equivalent to

\begin{problem}\label{problem-p1}
Prove $F_1\ge0$ subject to $S_o$.
\end{problem}

\noindent By Lemma \ref{GG}, Problem \ref{problem-p1} is solvable if and only if $F_1$ is a conic combination of $g_i,\ i\in\mathcal{N}_m$.
Suppose $F_1\equiv\sum_{i=1}^{m}{p_ig_i}$ with $p_i\in\R_{\ge0}$. Then
\begin{equation}\label{eq-comb}
	\begin{array}{ll}
F_1\equiv\sum\limits_{j=1}^{m}{p_jg_j}\\[0.2cm]
\ \ \ \ \equiv\sum\limits_{i=1}^{\widetilde{n}}p_{k_i} U_i+\sum\limits_{j\in\mathcal{N}_{m}\backslash\{k_i, i\in\mathcal{N}_{\widetilde{n}}\}}p_ja_j\\[0.2cm]
\ \ \ \ \equiv\sum\limits_{i=1}^{\widetilde{n}}p_{k_i} a_{k_i}-\sum\limits_{i=1}^{\widetilde{n}}p_{k_i} (a_{k_i}-U_i)+\sum\limits_{j\in\mathcal{N}_{m}\backslash\{k_i, i\in\mathcal{N}_{\widetilde{n}}\}}p_ja_j\\[0.2cm]
%\ \ \ =\sum\limits_{i=1}^{\widetilde{n}}p_{k_i} a_{k_i}+\sum\limits_{j\in\mathcal{N}_{m}\backslash\{k_i, i\in\mathcal{N}_{\widetilde{n}}\}}p_ja_j\\[0.2cm]
\ \ \ \ \equiv\sum\limits_{i=1}^{m}p_ia_i-\sum\limits_{i=1}^{\widetilde{n}}p_{k_i} (a_{k_i}-U_i).\\[0.2cm]
\ \ \ \ =\sum\limits_{i=1}^{m}p_ia_i,
\end{array}\end{equation}
where the last step follows from the constraints in $\widetilde{E}$.

So, $F_1$ can be expressed as a conic combination of $a_i$'s. 
%And we have 
%\begin{equation}\begin{array}{ll}
%F_1\equiv\sum\limits_{i=1}^{m}p_ia_i-\sum\limits_{i=1}^{\widetilde{n}}p_{k_i} a_{k_i}
%\equiv\sum\limits_{i=1}^{m}p_ia_i-\sum\limits_{i=1}^{\widetilde{n}}p_{k_i} (a_{k_i}-U_i).
%\end{array}\end{equation}
Then, by \eqref{gau-1} and the second last line above, we obtain
\begin{equation}\label{conic-c}
	\begin{array}{ll}
F\equiv F_1+\sum\limits_{i=1}^{\widetilde{n}}{q}_i(a_{k_i}-U_i)\\
\ \ \ \equiv\sum\limits_{i=1}^{m}p_ia_i-\sum\limits_{i=1}^{\widetilde{n}}p_{k_i} (a_{k_i}-U_i)+\sum\limits_{i=1}^{\widetilde{n}}{q}_i(a_{k_i}-U_i)\\
\ \ \ \equiv\sum\limits_{i=1}^{m}p_ia_i+\sum\limits_{i=1}^{\widetilde{n}}({q}_i-p_{k_i})(a_{k_i}-U_i)
\end{array}\end{equation}
where $p_i\in\R_{\ge0}$, $p_{k_i}\in\R_{\ge0}$ and $q_i\in\R$.

From Definition \ref{transf.}, we see that
$F_1$ is a conic combination of $g_i,\ i\in\mathcal{N}_m$ if and only if $F$ can be transformed into a conic combination of $a_i,\ i\in\mathcal{N}_m$ by  $E_{a}$.
Hence, following the discussion in the foregoing, the theorem is proved.
\end{proof}

\begin{define}
	Let $E_a=\{f_i=0,\ i\in\mathcal{N}_{\widetilde{m}}\}$, where $f_i$ is a polynomial in $\mathbf{a}$, be an equality set. We say eliminating a variable $a_i$ from $E_a$ to mean solving for $a_i$ in some $f_i=0$ with $a_i\in V(f_i)$ to obtain $a_i=A_i$ and then substituting $a_i=A_i$ into $E_a$ to obtain $E_A=subs(a_i=A_i,E_a)\backslash\{0=0\}$.
	
	Let $F$ be a polynomial in $\mathbf{a}$. We say eliminating $a_i$ from $F$ by $E_a$ to mean eliminating $a_i$ from $E_a$ to obtain $a_i=A_i$ and $E_A$, and then substituting $a_i=A_i$ into $F$ to obtain $F_1=subs(a_i=A_i,F)$.
\end{define}

% In Definitions \ref{def-redundant} and \ref{def-purein}, we introduced 
The notions of redundant inequality and implied equality in Definitions \ref{def-redundant} and \ref{def-purein}, respectively can be applied in the more general setting in Problem \ref{problem-op}. Specifically, $a_i=0,\ i\in\mathcal{N}_m$ is an implied equality if $-a_i\ge0$ is provable subject to $E_a$ and $S_a$.
Also, by eliminating $a_i$ for some $i\in\mathcal{N}_m$ from $E_a$ to obtain $a_i=A_i$ and $E_A$, $a_i\ge0$ is a redundant inequality if $A_i\ge0$ is provable subject to $E_A$ and $S_a\backslash\{a_i\ge0\}$.
	
\begin{example}
Let $S_a=\{a_i\ge0,\ i\in\mathcal{N}_5\}$ and $E_a=\{f_1=0, f_2=0\}$, where $f_1=a_1+a_2$ and $f_2=a_3-a_4-a_5$. 
Using $f_1=0$, $a_1\ge0$ and $a_2\ge0$,  we can obtain that $-a_1\ge0$ and $-a_2\ge0$.
Thus $a_1=0$ and $a_2=0$ are implied equalities.

By eliminating $a_3$ from $E_a$, we obtain $a_3=a_4+a_5$ and $E_A=\{a_1+a_2=0\}$. Since $a_4+a_5\ge0$ is obviously provable subject to $E_A$ and $S_a\backslash\{a_3\ge0\}$, we have that $a_3\ge0$ is a redundant inequality.
\end{example}

%Before developing the general algorithms for calculating pure and minimal LP, 

\begin{define}
	%Assume that $a_i\ge0,\ i\in\mathcal{N}_m$, and let $f=0$ where 
Let	$f$ be a polynomial in $\mathbf{a}=\{a_1,a_2,\ldots,a_m\}$.
Let $\widetilde{m}\le m$ and $j_1,j_2,\ldots,j_{\widetilde{m}}$ be distinct elements of $\{1,2,\ldots,m\}$.
	If  $f=\sum_{i=1}^{\bar{m}}p_ia_{j_i}$ or $f=-\sum_{i=1}^{\bar{m}}p_ia_{j_i}$ with $p_{i}>0$, then $f$ is called a Type I linear combination of $a_{j_i}$. 
	If $f=\sum_{i=1}^{\bar{m}-1}p_ia_{j_i}-p_{\bar{m}}a_{j_{\bar{m}}}$ or $f=-\sum_{i=1}^{\bar{m}-1}p_ia_{j_i}+p_{\bar{m}}a_{j_{\bar{m}}}$ with $p_{i}>0$, then $f$ is called a Type II linear combination of $a_{j_i}$, and let $single(f)=a_{j_{\bar{m}}}$. 
	%Here, we have $0<\bar{m}\le\widetilde{m}$ and $j_1<j_2<\cdots<j_{\widetilde{m}}\le m$.
\end{define}

\begin{define}
In Problem \ref{problem-op}, if $(f=0)\in E_a$ and
\begin{itemize}
\item[1)] if $f$ is Type I, then $a_{i}=0$ for $a_{i}\in V(f)$ are called trivially implied equalities;
\item[2)] if $f$ is Type II, then $single(f)\ge0$ is called a trivially redundant inequality.
\end{itemize}
\end{define}

\begin{example}
Let $E_a=\{f_i=0,\ i\in\mathcal{N}_4\}$, where $f_1=a_1+a_2$, $f_2=-a_1-a_2$, $f_3=a_4-a_5-a_6$, and $f_4=a_7+a_8-2a_9$. Then
$f_1$ and $f_2$ are Type I, $f_3$ and $f_4$ are Type II, $single(f_3)=a_4$, and $single(f_4)=a_9$.
It can readily be checked that $a_1=0$ and $a_2=0$ are trivially implied equalities, and $a_4\ge0$ and $a_9\ge0$ are trivially redundant inequalities.
\end{example}

In the rest of the paper, we denote the $i$th element of a sequence $B$ by $B[i]$.
We also denote the $i$th element of a set $S$ by $S[i]$, where the elements in $S$ are assumed to be sorted in lexicographic order.
For example, $x_1+2x_2\succ x_2+x_5$ and $x_3+x_5\succ x_3+x_6$.

Now we develop an algorithm to remove all trivially implied equalities and trivially redundant inequalities in Problem~\ref{problem-op}. 
To facilitate the discussion, we use $subs(\cdot,\cdot)$ to denote eleminating one or more variables from a set of polynomials by substitution.
The use of this notation will be illustrated in Example \ref{emIII.4}.
 \ 
\\
\begin{breakablealgorithm}
	\caption{Preprocessing Problem \ref{problem-op}}
	\label{QuickIMP-RED}
	\begin{algorithmic}[1]
		\REQUIRE \ Problem \ref{problem-op}.
		\ENSURE \  A reduced LP for Problem \ref{problem-op}.
		\\[0.4cm]
		\STATE Let $E_1:=\widetilde{\mathcal{R}}^{-1}(E_{a})$, $S_{1}:=\mathcal{R}^{-1}(S_{a})$, $F_1:=F$, $i_1:=1$.
	    \WHILE{$i_1=1$}
	    \STATE Let $i_1:=0.$
		\FOR{$i$ from $1$ to $|E_1|$}
		\STATE Let $f:=E_1[i]$.
		\IF{$f$ is Type I}
		\STATE // In this case, all equalities in $\widetilde{\mathcal{R}}(V(f))$ are trivially implied equalities.
		\STATE $E_1:=subs(\widetilde{\mathcal{R}}(V(f)),E_1)\backslash\{0\}$.
		\STATE $S_{1}:=S_{1}\backslash V(f)$.
		\STATE $F_1:=subs(\widetilde{\mathcal{R}}(V(f)),F_1)$.
		\STATE $i_1:=1$.
		\ENDIF
		\IF{$f$ is Type II}
		\STATE // In this case, the inequality $single(f)\ge0$ is a trivially redundant inequality.
		\STATE $E_1:=subs(single(f)=solve(f,single(f)),E_1)\backslash\{0\}$.
		\STATE $S_{1}:=S_{1}\backslash\{single(f)\}$.
		\STATE $F_1:=subs(single(f)=solve(f,single(f)),F_1)$.
		\STATE $i_1:=1$.
		\ENDIF
		\ENDFOR
		\ENDWHILE
		\RETURN  
		A reduced LP:
		\begin{problem}\label{problem-op1}
			Prove $F_1\ge0$ subject to $\widetilde{\mathcal{R}}(E_1)$ and $\mathcal{R}(S_1)$.
		\end{problem}		
	\end{algorithmic}
\end{breakablealgorithm}
 \ 
 \\[0.005cm]
% \ 
 \begin{example}\label{emIII.4}
 	We want to prove $F=a_1+2a_2-a_3\ge0$ subject to $E_a=\{a_1+a_2-a_3-a_4-a_5=0,\ a_1+a_4=0\}$ and $S_a=\{a_i\ge0,\ i\in\mathcal{N}_5\}$.
Following Algorithm \ref{QuickIMP-RED}, we give the steps in detail.

Step 1. $F_1=a_1+2a_2-a_3$, $E_1=\{a_1+a_2-a_3-a_4-a_5,\ a_1+a_4\}$, $S_1=\{a_1,a_2,a_3,a_4,a_5\}$.

Step 2. Since $a_1+a_4$ is Type I, we obtain $a_1=0$ and $a_4=0$ from $a_1+a_4=0$, $a_1\ge0$, and $a_4\ge0$.

Step 3. Obtain $E_1:=subs(a_1=0, a_4=0, E_1\})\backslash\{0\}=\{a_2-a_3-a_5\}$, $F_1:=subs(a_1=0, a_4=0, F_1)=2a_2-a_3$, 

and $S_1:=S_1\backslash\{a_1,a_4\}=\{a_2,a_3,a_5\}$.

Step 4. Now $a_2-a_3-a_5$ is Type II. Then solve $a_2$ from $a_2-a_3-a_5$ to obtain $a_2=a_3+a_5$. 
%(alternatively can solve $a_3$ to obtain $a_3=a_2$).

Step 5. Obtain $F_1:=subs(a_2=a_3+a_5,F_1)=a_3+2a_5$, $E_1:=subs(a_2=a_3+a_5,E_1)\backslash\{0\}=\emptyset$, and $S_1:=S_1\backslash\{a_2\}=\{a_3,a_5\}$.

Now the reduced problem is to prove $a_3+2a_5\ge0$ subject to $a_3\ge0$ and $a_5\ge0$, which is obviously solvable.
 \end{example}

Algorithm \ref{QuickIMP-RED} removes all the trivially implied equalities and trivially redundant inequalities from Problem \ref{problem-op}. In Appendix A, we will develop two enhancements of Algorithm \ref{QuickIMP-RED}: Algorithm \ref{FindImp} for removing all implied equalities and Algorithm \ref{Findredundant} for removing all redundant inequalities.

%We develop two algorithms as enhancements of Algorithm \ref{QuickIMP-RED}, for removing all implied equalities and redundant inequalities in Problem \ref{problem-op}, respectively. These two algorithms, Algorithms \ref{FindImp} and \ref{Findredundant}, are presented in Appendix A.

Toward solving Problem \ref{problem-op}, we first apply Algorithm \ref{QuickIMP-RED} to reduce it to Problem \ref{problem-op1}.
The next algorithm is a heuristic that attempts to solve this problem. If unsuccesful, Algorithms \ref{FindImp} and \ref{Findredundant} will be applied to further reduce the LP into a smaller one that contains no implied equality and redundant inequality. This will be illustrated in Example \ref{III.5}.
\\
\begin{breakablealgorithm}
	\caption{Heuristic search for a conic combination}
	\label{Heuristic search}
	\begin{algorithmic}[1]
		\REQUIRE \ Problem \ref{problem-op1}. 
		\ENSURE \  SUCCESSFUL, or UNSUCCESSFUL and a reduced LP.
		\\[0.4cm]
%	\STATE Apply Algorithm \ref{QuickIMP-RED} to reduce Problem \ref{problem-op} to Problem \ref{problem-op1}.
		\STATE Let $J:=E_{1}$,
		%$J:=\widetilde{\mathcal{R}}^{-1}(E_{1})$, 
		$J_2:=\emptyset$.
		\STATE Let $\V(F_1)=[a_{i_1},\ldots,a_{i_{n_3}}]$ and $\C(F_1)=[p_1,\ldots,p_{n_3}]$, where $1\le n_3\le m$ and the coefficient $p_j$ corresponds to the variable $a_{i_j}$ for all~$j\in\mathcal{N}_{n_3}$.
	\WHILE{$({\rm there\ exists}\ p_j<0\ {\rm for\ some}\ j\in\mathcal{N}_{n_3})$ $\wedge$ $(|J|>0)$ $\wedge$ $(a_{i_j}\in V(f)\ {\rm for\ some}\ f\in J)$}
		\STATE Solve $a_{i_j}$ from $f=0$ to yield $a_{i_j}=A_{i_j}$ such that $A_{i_j}$ is a polynomial in $V(f)\backslash \{a_{i_j}\}$.
		\STATE $F_1:=F_1-p_j(a_{i_j}-A_{i_j})$.
		\STATE $J:=subs(a_{i_j}=A_{i_j},J)\backslash\{0\}$.
		\STATE $J_2:=subs(a_{i_j}=A_{i_j},J_2)\cup\{a_{i_j}-A_{i_j}\}$.
		\STATE Update $\V(F_1)$ and $\C(F_1)$.
		\ENDWHILE
	   \IF{there does not exist a negative element in $\C(F_1)$}
	   \STATE // $F_1\ge0$ is obviously implied by $\mathcal{R}(S_1)$.
		\STATE Return `SUCCESSFUL'.
		\ELSE 
\STATE // Need to solve 
\begin{problem}\label{problem-op2}
Prove $F_1\ge0$ subject to $\widetilde{R}(J \cup J_2)$ and $\mathcal{R}(S_1)$.
\end{problem}
	     \STATE // Instead of reducing $F_1$ by $J\cup J_2$ directly, since $J_2$ is already in row echelon form after the WHILE loop, we can simplify the computation as follows.
%		\STATE Compute the RREF of $J$ to obtain $\widetilde{J}$.
		\STATE Reduce $F_1$ and $J_2$ by $J$ to obtain the remainder $F_2$ and the remainder set $\widetilde{J}_2$, respectively, and also the RREF of $J$ denoted by $\widetilde{J}$.
		\STATE Let $\widetilde{\mathcal{E}}_1=\widetilde{J}\cup \widetilde{J}_2$, which is an RREF of $\widetilde{\mathcal{R}}^{-1}(E_a)$.
		\STATE // Problem \ref{problem-op2} is reduced to 
		\begin{problem}\label{problem-op3}
			Prove $F_2\ge0$ subject to $\widetilde{R}(\widetilde{\mathcal{E}}_1)$ and $\mathcal{R}(S_1)$.
		\end{problem}
		
		\STATE Apply Algorithms \ref{FindImp} and \ref{Findredundant} to Problem \ref{problem-op3} to obtain a reduction of Problem \ref{problem-op1}:
			\begin{problem}\label{problem-op4}
			Prove $F_3\ge0$ subject to $\widetilde{\mathcal{R}}(\widetilde{\mathcal{E}}_2)$ and $\mathcal{R}(V(\{F_3\} \cup \widetilde{\mathcal{E}}_2))$.
		\end{problem}
	\STATE // Problem \ref{problem-op4} contains no implied equalities and redundant inequalities. Thus we only need to consider the inequality constraints $\mathcal{R}(V(\{F_3\} \cup \widetilde{\mathcal{E}}_2))$ instead of $\mathcal{R}(S_1)$, where $|V(\{F_3\} \cup \widetilde{\mathcal{E}}_2)|\le|S_1|$.
		
		\STATE Return `UNSUCCESSFUL' and Problem \ref{problem-op4}.
   \ENDIF
\end{algorithmic}
\end{breakablealgorithm}
\ 
\\
\ 

Next, we give an example to show that Algorithm \ref{Heuristic search} is not always successful even though the problem is solvable.
In general, different decisions made in the algorithm 
can lead to different outcomes. 

\begin{example}\label{III.5}
	We want to determine whether $F=-\frac{1}{2}a_1-a_2+a_3+a_4+a_5-a_6+a_7+a_9\ge0$ subject to $S_a=\{a_i\ge0, i\in\mathcal{N}_{12}\}$ and $E_a=\{a_1+a_2-a_3-a_4=0,a_1+a_2-a_4+a_9+a_{10}-a_{11}-a_{12}=0, a_6-a_9-a_{10}+a_{11}+a_{12}=0, a_5-2a_6=0,a_7+a_8=0\}$.
	Following Algorithm \ref{Heuristic search}, we give the steps in detail. 
	
	\textbf{Step 1}. Run Algorithm \ref{QuickIMP-RED} to obtain
	
	\textbf{Problem \ref{problem-op1}$(*)$}. Prove $F_1\ge0$ subject to $\widetilde{\mathcal{R}}(E_1)$ and $\mathcal{R}(S_1)$, where $F_1=-\frac{1}{2}a_1-a_2+a_3+a_4+a_6+a_9$,  $E_1=\{a_1+a_2-a_3-a_4,\ a_1+a_2-a_4+a_9+a_{10}-a_{11}-a_{12},\ a_6-a_9-a_{10}+a_{11}+a_{12}\}$, and $S_1=\{a_1,a_2,a_3,a_4,a_6,a_9,a_{10},a_{11},a_{12}\}$.

Here, we use Problem \ref{problem-op1}$(*)$ to denote a special instance of Problem \ref{problem-op1}. Similar notations will apply.
	
	Let $J:=E_1=\{a_1+a_2-a_3-a_4,\ a_1+a_2-a_4+a_9+a_{10}-a_{11}-a_{12},\ a_6-a_9-a_{10}+a_{11}+a_{12}\}$, $J_2:=\emptyset$.
	
 Referring to Line 4 of Algorithm \ref{Heuristic search}, we discuss two possible cases.
	
 Case 1: Assume that we solve $a_2$ from $a_1+a_2-a_3-a_4=0$.
	
	{\bf Step 2.} 
	Solve $a_2$ from $a_1+a_2-a_3-a_4=0$ to obtain $a_2=-a_1+a_3+a_4$.
	
	$F_1:=subs(a_2=-a_1+a_3+a_4, F_1)=\frac{1}{2}a_1+a_6+a_9$.
	
	Then $F\ge0$ is proved.
	
 Case 2: Assume that we solve $a_1$ from $a_1+a_2-a_3-a_4=0$.
	
	{\bf Step 2.} 
	Solve $a_1$ from $a_1+a_2-a_3-a_4=0$ to obtain $a_1=-a_2+a_3+a_4$.
	
	$F_1:=subs(a_1=-a_2+a_3+a_4, F_1)=-\frac{1}{2}(a_2-a_3-a_4)+a_6+a_9$.
	
	$J:=subs(a_1=-a_2+a_3+a_4, J)\backslash\{0\}=\{a_3 + a_9 + a_{10} - a_{11}-a_{12}, a_{6} - a_{9} - a_{10} + a_{11}+a_{12}\}$.
	
	$J_2:=subs(a_1=-a_2+a_3+a_4, J_2)\cup\{a_1+a_2-a_3-a_4\}=\{a_1+a_2-a_3-a_4\}$.
	
 After executing this step, we observe that $a_2\notin V(J)$ and the {\bf while} loop in Algorithm \ref{Heuristic search} is terminated. However, we have not yet solved the problem. Thus, we need to continue with the remaining steps in Algorithm \ref{Heuristic search}.
	
	{\bf Step 3.} Reduce $F_1$ and $J_2$ by $J$ to obtain the remainder $F_2=-\frac{1}{2}(a_2-a_4-3a_9-a_{10}+a_{11}+a_{12})$ and the remainder set $\widetilde{J}_2=\{a_1 + a_2 - a_4 + a_9 + a_{10}- a_{11}-a_{12}\}$, respectively, and also the RREF of $J$ denoted by $\widetilde{J}=\{a_3 + a_9 + a_{10} - a_{11}-a_{12}, a_{6} - a_{9}- a_{10} + a_{11}+a_{12}\}$. 
	%To simplify the coefficients in $\widetilde{J}$, we multiply them by $2$ and let $\widetilde{J}=\{a_3+a_4+2a_9\}$.
	
	{\bf Step 4.} Let $\widetilde{\mathcal{E}}_1=\widetilde{J}\cup \widetilde{J}_2=\{a_1 + a_2 - a_4 + a_9 + a_{10}- a_{11}-a_{12}, a_3 + a_9 + a_{10} - a_{11}-a_{12}, a_{6} - a_{9}- a_{10} + a_{11}+a_{12}\}$.
	
	Now the problem becomes
	
	\textbf{Problem \ref{problem-op3}$(*)$}. Prove $F_2\ge0$ subject to $\widetilde{\mathcal{R}}(\widetilde{\mathcal{E}}_1)$ and $\mathcal{R}(S_1)$, where  
	$F_2=-\frac{1}{2}(a_2-a_4-3a_9-a_{10}+a_{11}+a_{12})$, \\
	$\widetilde{\mathcal{E}}_1=\{a_1 + a_2 - a_4 + a_9 + a_{10}- a_{11}-a_{12}, a_3 + a_9 + a_{10} - a_{11}-a_{12}, a_{6} - a_{9}- a_{10} + a_{11}+a_{12}\}$, 
	and $S_1=\{a_1,a_2,a_3,a_4,a_6,a_9,a_{10},a_{11},a_{12}\}$.
	%subject to $\widetilde{\mathcal{R}}(\widetilde{\mathcal{E}}_1)$ and $\mathcal{R}(S_1)$.
	
	{\bf Step 5.}  
	Run Algorithms \ref{FindImp} and \ref{Findredundant} to reduce the problem to
	
	\textbf{Problem \ref{problem-op4}$(*)$}.  Prove $F_3\ge0$ subject to $\widetilde{\mathcal{R}}(\widetilde{\mathcal{E}}_2)$ and $\mathcal{R}(V(\{F_3\} \cup \widetilde{\mathcal{E}}_2))$, where
	 $F_3=\frac{1}{2}a_{1} - a_{10} + a_{11} + a_{12}$
	and $ \widetilde{\mathcal{E}}_2=\{a_9 + a_{10} - a_{11}-a_{12}\}$.
	%Proving $a_6\ge0$ subject to $a_6\ge0$.
	
	 In this step, all the implied equalities and redundant inequalities in Problem \ref{problem-op3}$(*)$ are removed. The detailed steps of this reduction from Problem \ref{problem-op3}$(*)$ to Problem \ref{problem-op4}$(*)$ are given in Appendix A.
	
	Since $F_3+\widetilde{\mathcal{E}}_2[1]=\frac{1}{2}a_{1} +a_9\ge0$, the above LP is solvable. Thus, $F\ge0$ is provable. 
\end{example}

In Line 4 of Algorithm \ref{Heuristic search}, we need to solve $a_{i_j}$ from $f=0$ for some $f\in J$. Different choices of $a_{i_j}$'s can lead to different outcomes, which has been shown in Case 1 and Case 2 in Example \ref{III.5}. Similarly, different choices of $f\in J$ can also lead to different outcomes. For example, following from Example \ref{III.5}, instead of solving $a_1$ or $a_2$ from $a_1+a_2-a_3-a_4=0$ in Step 2, one can also solve $a_1$ or $a_2$ from $a_1+a_2-a_4+a_9+a_{10}-a_{11}-a_{12}=0$. The details are omitted here.

%In the aforementioned example, we discussed two cases for showing that different variable orders would lead to different results. Based on Line 3 of Algorithm \ref{Heuristic search}, it is observed that we also can choose $a_1+a_2-\frac{1}{2}a_3-\frac{1}{2}a_4+a_9=0$ to solve $a_1$ or $a_2$ in Step 2 of the example, which shows that different choices of equality constraints also will lead to different results.
%Obviously, in Case 2, using our heuristic method make we only need to solve a much smaller LP than directly solving the problem.
%In the Step 2 of the above discussion, we choose $a_1+a_2-a_3-a_4=0$ to solve $a_1$ or $a_2$ to show that different variable orders lead to different results, but we may choose $a_1+a_2-\frac{1}{2}a_3-\frac{1}{2}a_4+a_9=0$ to solve $a_1$ or $a_2$. Actually, different choices of equality constraints also will lead to different results. One can verify this point easily.

%
Assume that Algorithm \ref{Heuristic search} outputs `UNSUCCESSFUL' and Problem \ref{problem-op4}, which is a reduction of Problem \ref{problem-op1}. We now present the following algorithm for solving this problem.
\\
\ 
\begin{breakablealgorithm}
	\caption{Solving Problem \ref{problem-op4}}
	\label{FindComb}
	\begin{algorithmic}[1]
		\REQUIRE \ Problem \ref{problem-op4}. 
		\ENSURE \  The statement “Problem \ref{problem-op4} is solvable” is TRUE or FALSE.
		\\[0.4cm]
%\STATE Apply Algorithm \ref{Heuristic search} to Problem \ref{problem-op}.
%\IF{Algorithm \ref{Heuristic search} outputs `SUCCESSFUL'}
%\STATE Declare that “Problem \ref{problem-op} is solvable” is `TRUE'.
%\ELSE 
%\STATE We need to solve Problem \ref{problem-op4}.
\STATE Assume that $\widetilde{\mathcal{E}}_2$ has the form $\widetilde{\mathcal{E}}_2=\{a_{k_l}-A_{k_l}, l\in\mathcal{N}_{r}\}$, where $r$ is the rank of $\widetilde{\mathcal{E}}_2$, and $A_{k_l}$'s are linear combinations of the free variables $a_{k_{r+1}},\ldots,a_{k_{t}}$, where $t=|V(\widetilde{\mathcal{E}}_2)|\le m$.
	
	\STATE Let $F_4\equiv F_3+\sum\limits_{l=1}^{r}p_l(a_{k_l}-A_{k_l})$, where $p_l, 1\le l\le r$ are to be determined. Since $F_3$ and $A_{k_l}$'s are in terms of the free variables, we can rewrite $F_4$ as
	$F_4\equiv \sum\limits_{l=1}^{r}p_la_{k_l}+\sum\limits_{l=r+1}^{t}P_l a_{k_{l}}$, where $P_l$'s are linear combinations of  $p_l$'s.
	\STATE	// 	By Theorem \ref{Combofalpha}, Problem \ref{problem-op4} can be proved if and only if $F_4$ can be expressed as a conic combination of $a_i$'s.
	\STATE  Solve the following LP:
	\begin{problem}\label{problem-op5}
	{\rm min}$(0)$\ \ {\rm such\ that}\ \  $p_l\ge0,\ l\in\mathcal{N}_r$ {\rm and} $P_l\ge0,\ l\in\mathcal{N}_{t}\backslash\mathcal{N}_{r}$.
	\end{problem}
	
\IF{Problem \ref{problem-op5} can be solved}
\STATE Declare that “Problem \ref{problem-op4} can be solved” is `TRUE'.
\ELSE
\STATE Declare that “Problem \ref{problem-op4} can be solved” is `FALSE'.
\ENDIF
%\ENDIF
\RETURN The argument “the Problem \ref{problem-op4} can be solved” is TRUE or FALSE.
\end{algorithmic}
\end{breakablealgorithm}

%By Theorem \ref{Combofalpha}, we can easily get the following theorem.
%\begin{theorem}
%The Problem \ref{problem-op} can be proved iff Algorithm \ref{FindComb}  outputs `TRUE'.
%\end{theorem}
\  
\\[0.1cm]

\begin{remark}
From Theorem \ref{Combofalpha}, Problem \ref{problem-op} is solvable if and only if $F$ can be transformed into a conic combination of $a_i,\ i\in\mathcal{N}_m$ by $E_{a}$. 
We first apply Algorithm \ref{QuickIMP-RED} to Problem \ref{problem-op} to obtain Problem \ref{problem-op1}, which contains no trivially implied equalities and trivially redundant inequalities.
Then we use Algorithm \ref{Heuristic search}, a heuristic method, to try to find a conic combination for $F_1$. Specifically, we identify a monomial term of $F_1$ with a negative coefficient and use $E_1$ to eliminate the corresponding variable in $F_1$. 
Then we repeat this operation until it can not be done. There can be two possibilities. If a conic combination for $F_1$ is obtained, then the problem is solved. If a conic combination for $F_1$ is not obtained, then the problem may or may not be solvable. Algorithm \ref{FindComb} deals with this case. Here, even if Algorithm \ref{Heuristic search} can not solve the problem directly, it will reduce Problem \ref{problem-op} to an equivalent LP but smaller in size, which can be solved effectively by Algorithm \ref{FindComb}.
\end{remark}

\section{Procedures for proving information inequality and identity}

In this section, we present two procedures for proving information inequalities and identities under the constraint of an inequality set and/or equality set.
They are designed in the spirit of Theorem~\ref{LP-S}.
To simplify the discussion, $H(X_1,X_2,\ldots,X_n)$ will be denoted by $h_{1,2,\ldots,n}$, so on and so forth. 
For a joint entropy $t=h_{i_1,i_2,...,i_n}$, the set $L(t)=\{i_1,i_2,...,i_n\}$ is called the {\it subscript set} of $t$. The following defines an order among the joint entropies.

\begin{define}\label{reverse-lexi-order}
	Let $t_1=h_{i_1,i_2,\ldots,i_{n_1}}$ and $t_2=h_{j_1,j_2,\ldots,j_{n_2}}$ be two joint entropies.
	We write $t_1\succ t_2$ if one of the following conditions is satisfied:
	\begin{enumerate}
		\item[1)] $|L(t_1)|>|L(t_2)|$, 
		\item[2)] $|L(t_1)|=|L(t_2)|$, $i_l=j_l$ for $l=1,\ldots,k-1$ and $i_k>j_k$.
	\end{enumerate}
\end{define}

%We use $h_{1,2,\ldots,n}\succ\cdots\succ h_1$ to denote that all joint entropies are ordered in reverse lexicographic order.

\subsection{Procedure I: Proving Information Inequalities}

\noindent
\textbf{Input:}\\ 
Objective information inequality: $\bar{F}\ge0$.\\
Elemental information inequalities: $\bar{C}_i\ge0,\ i=1,\ldots,m_1$.\\
Additional constraints: $\bar{C}_j\ge0,\ j=m_1+1,\ldots,m_2$; $\bar{C}_k=0,\ k=m_2+1,\ldots,m_3$.\\
// \ Here, $\bar{F},\ \bar{C}_i,\ \bar{C}_j,$ and $\bar{C}_k$ are linear combination of Shannon's information measures.\\ 

\noindent
\textbf{Output:} A proof of $\bar{F}\ge0$ if it is implied by the elemental inequalities and the additional constriants.

\textbf{Step 1}. Transform $\bar{F}$ and $\bar{C}_i$, $i\in \mathcal{N}_{m_3}$ to homogeneous linear polynomials $\widetilde{F}$ and $\widetilde{C}_i$, $i\in \mathcal{N}_{m_3}$ in joint entropies.

\textbf{Step 2}. Fix the joint entropies' order $h_{1,2,\ldots,n}\succ\cdots\succ h_1$. 
Apply Algorithm \ref{division-algorithm} to reduce the inequality set $\{\widetilde{C}_i\ge0, i\in\mathcal{N}_{m_2}\}$ by the equality set $\{\widetilde{C}_i=0, i\in\mathcal{N}_{m_3}\backslash\mathcal{N}_{m_2}\}$ to obtain the reduced inequality set $\{C_i\ge0,i\in\mathcal{N}_{m}\}$.

\textbf{Step 3}. Reduce $\widetilde{F}$ by the equality set $\{\widetilde{C}_i=0, i\in\mathcal{N}_{m_3}\backslash\mathcal{N}_{m_2}\}$ to obtain $F_5$.

// We need to solve
\begin{problem}\label{problem-p8}
 Prove $F_5\ge0$ under the constraints $C_i\ge0$, $i\in\mathcal{N}_{m}$.
\end{problem}

\textbf{Step 4}. Under the variable order $h_{1,2,\ldots,n}\succ\cdots\succ h_1\succ a_1\succ\cdots \succ a_m$, apply Algorithm \ref{Algorithm-transf.} to Problem \ref{problem-p8} to obtain

%\begin{problem}\label{problem-p9}
\textbf{Problem \ref{problem-p5}$(*)$}. Prove $F\ge0$ subject to $\widetilde{\mathcal{R}}(J_1)$ and $a_i\ge0,\ i\in\mathcal{N}_{m}$, where $J_1=\{{f}_i,\ i\in\mathcal{N}_{m_4}\}$. 
%\end{problem}

\textbf{Step 5}. Apply Algorithm \ref{QuickIMP-RED} and Algorithm \ref{Heuristic search} successively to the above problem.
If Algorithm \ref{Heuristic search} outputs `SUCCESSFUL', then the objective function $\bar{F}\ge0$ is proved. Otherwise, the following reduced LP is obtained:

\textbf{Problem \ref{problem-op4}$(*)$}. Prove $F_3\ge0$ subject to $\widetilde{\mathcal{R}}(\widetilde{\mathcal{E}}_2)$ and $\mathcal{R}(V(\{F_3\} \cup \widetilde{\mathcal{E}}_2))$, 
	where $\widetilde{\mathcal{E}}_2=\{\tilde{f}_i,\ i\in\mathcal{N}_{m_5}\}$. 

// Note that $m_5\le m_4$ and $|\mathcal{R}(V(\{F_3\} \cup \widetilde{\mathcal{E}}_2))|\le m$.

\textbf{Step 6}. Apply Algorithm \ref{FindComb} to the above problem. If Algorithm \ref{FindComb} outputs `TRUE', then the objective function $\bar{F}\ge0$ is proved. 
Otherwise, declare `Not Provable'.

\subsection{Procedure II: Proving Information Identities}
		
		\noindent
		\textbf{Input:} \\
		Objective information identity: $\bar{F}=0$.\\
   Elemental information inequalities: $\bar{C}_i\ge0,\ i=1,\ldots,m_1$.\\
		Additional constraints: $\bar{C}_j\ge0,\ j=m_1+1,\ldots,m_2$; $\bar{C}_k=0,\ k=m_2+1,\ldots,m_3$; \\
		//\ \ Here, $\bar{F},\ \bar{C}_i,\ \bar{C}_j,$ and $\bar{C}_k$ are linear combinations of information measures.
		
		\noindent
		\textbf{Output:}
		A proof of $\bar{F}=0$ if it is implied by the elemental inequalities and the additional constraints.
       
      \textbf{ Step 1}. Transform $\bar{F}$ and $\bar{C}_i$, $i\in \mathcal{N}_{m_3}$ to homogeneous linear polynomials $\widetilde{F}$ and $\widetilde{C}_i$, $i\in \mathcal{N}_{m_3}$ in joint entropies.
       
     \textbf{  Step 2}. Fix the joint entropies' order $h_{1,2,\ldots,n}\succ\cdots\succ h_1$. 
       Apply Algorithm \ref{division-algorithm} to reduce the inequality set $\{\widetilde{C}_i\ge0, i\in\mathcal{N}_{m_2}\}$ by the equality set $\{\widetilde{C}_i=0, i\in\mathcal{N}_{m_3}\backslash\mathcal{N}_{m_2}\}$ to obtain the reduced inequality set $\{C_i\ge0,i\in\mathcal{N}_{m}\}$.
       
     \textbf{  Step 3}. Reduce $\widetilde{F}$ by the equality set $\{\widetilde{C}_i=0, i\in\mathcal{N}_{m_3}\backslash\mathcal{N}_{m_2}\}$ to obtain $F_6$.
       
       // We need to solve
      % \theoremindent2em
       \begin{problem}\label{problem-p12}
     %	\textbf{Problem IV.2.}
       	Prove $F_6\ge0$ under the constraints $C_i\ge0$, $i\in\mathcal{N}_{m}$.
       \end{problem}
   
   \textbf{ Step 4}. Under the variable order $h_{1,2,\ldots,n}\succ\cdots\succ h_1\succ a_1\succ\cdots \succ a_m$, apply Algorithm \ref{Algorithm-transf.} to Problem \ref{problem-p12} to obtain

\textbf{Problem \ref{problem-p5}$(*)$}. Prove $F\ge0$ subject to $\widetilde{\mathcal{R}}(J_1)$ and $a_i\ge0,\ i\in\mathcal{N}_{m}$, where $J_1=\{{f}_i,\ i\in\mathcal{N}_{m_4}\}$. 
%Proving $F_{11}$ under $E_{6}$ and $S_{6}$, where $S_{6}=\{a_{j_i}\ge0,\ i\in\mathcal{N}_{m_3}\}$, $E_{6}=\{f_i=0,\ i\in\mathcal{N}_{m_4}\}$. 

\textbf{Step 5}. Apply Algorithm \ref{FindImp} to the above problem to obtain a reduced and pure LP:

\textbf{Problem \ref{pureproblem}$(*)$}. Prove $F_1$ subject to $\widetilde{R}(E_1)$ and $\mathcal{R}(V(\{F_1\}\cup E_1))$.
%Proving $F_{12}$ under $E_7$ and $S_7$, where $S_{7}=\{a_{j_i}\ge0,\ i\in\mathcal{N}_{m_5}\}$, $E_{7}=\{f_i=0,\ i\in\mathcal{N}_{m_6}\}$. 

If $F_{1}\equiv 0$, then the objective function $\bar{F}=0$ is proved.  Otherwise, declare `Not Provable'.

%\section{Example}
Next, we give an example to show the effectiveness of our procedure.
\begin{example}\label{example-1}
$I(X_i;X_4)=0,\ i=1,2,3$ and $H(X_4|X_i,X_j)=0, 1\leq i<j\leq 3$ $\Rightarrow$ $H(X_i)\geq H(X_4)$.
\end{example}

This example has been discussed in \cite{Guo-Yeung-Gao2023}. Due to the symmetry of this problem, we only need to prove $H(X_1) \ge H(X_4)$.
Next we give the proof based on Procedure I.

\textbf{Step 1.} We need to solve the problem:\\
Prove
$$\widetilde{F}=h_1-h_4$$
under constraints $\widetilde{C}_i\ge0,\ i=1,\ldots,28$, $\widetilde{C}_i=0,\ i=29,\ldots,34$. 

\textbf{Steps 2-3.} After reduction, the above problem becomes:\\
Prove
$$
F=h_1-h_4
$$

under the constraints $C_{i}\ge0,\ i=1,\ldots,18$, where
\begin{equation}
	\begin{array}{ll}
C_1=h_{4},
C_2=h_{1} + h_{2} - h_{1, 2},
C_3=h_{1} + h_{2} + h_{4} - h_{1, 2},
C_4=h_{1} + h_{3} - h_{1, 3},
C_5=h_{1} + h_{3} + h_{4} - h_{1, 3},\\
C_6=h_{2} + h_{3} - h_{2, 3},
C_7=h_{2} + h_{3} + h_{4} - h_{2, 3},
C_8=-h_{1} + h_{1, 2} + h_{1, 3} -   h_{1, 2, 3},
C_9=-h_{2} + h_{1, 2} + h_{2, 3} - h_{1, 2, 3},\\
C_{10}=-h_{3} + h_{1, 3} +   h_{2, 3} - h_{1, 2, 3},
C_{11}=-h_{1} - h_{4} + h_{1, 2} + h_{1, 3} -   h_{1, 2, 3, 4},
C_{12}=-h_{2} - h_{4} + h_{1, 2} + h_{2, 3} -   h_{1, 2, 3, 4},\\
C_{13}=-h_{3} - h_{4} + h_{1, 3} + h_{2, 3} - h_{1, 2, 3, 4},
C_{14}=h_{1, 2, 3} - h_{1, 2, 3, 4},
C_{15}=-h_{1, 2} + h_{1, 2, 3, 4},
C_{16}=-h_{1, 3} +   h_{1, 2, 3, 4},\\
C_{17}=-h_{2, 3} + h_{1, 2, 3, 4},
C_{18}=-h_{1, 2, 3} + h_{1, 2, 3, 4}.
\end{array}\end{equation}

\textbf{Step 4}. Apply Algorithm \ref{Algorithm-transf.} to the above problem to obatin a reduced problem:

\textbf{Problem \ref{problem-p5}$(*)$}. Prove $F\ge0$ subject to $\widetilde{\mathcal{R}}(J_1)$ and $a_i\ge0,\ i\in\mathcal{N}_{18}$, where $F=a_{6}+a_{7} + a_{14}$ and $J_1=\{{f}_i,\ i\in\mathcal{N}_{9}\}$. 

%Proving $F_7$ under $E_{3}$ and $S_{3}$, where $S_{3}=\{a_i\ge0,\ i\in\mathcal{N}_{18}\}$, $E_{3}=\{f_i=0,\ i\in\mathcal{N}_{9}\}$. 
%Here, $F_7=a_{6}+a_{7} + a_{14}.$

\textbf{Step 5}. Since $F$ is a conic combination of $a_i,\ i\in\mathcal{N}_{18}$, $\bar{F}\ge0$ is provable.
\\[0.3cm]

The aforementioned problem is proved without solving any LPs. If we want to further find a reduced minimal problem, then we can apply Algorithm \ref{FindImp} and \ref{Findredundant} to Problem \ref{problem-p5} to obtain the following LP that contains no implied equality and redundant inequality:

Prove $F_1$ subject to $\widetilde{R}(E_1)$ and $\mathcal{R}(V(\{F_1\}\cup E_1))$,
%Prove $F_9$ subject to $E_5$ and $S_5$, where $S_{5}=\{a_{i_j}\ge0,\ i_j\in\mathcal{N}_{18},\ j\in\mathcal{N}_{7}\}$, $E_{5}=\{\tilde{f}_i=0,\ i\in\mathcal{N}_{2}\}$. 
where $E_1=\{\tilde{f}_1,\tilde{f}_2\}$,
\begin{equation}\begin{array}{ll}
F_1=a_{6} - a_{11} + a_{12} + a_{13} + a_{17},\\
\tilde{f}_1=a_4- a_6+ a_{11} - a_{12},\\
\tilde{f}_2= a_2 - a_6 + a_{11} - a_{13}.
\end{array}\end{equation}
Since $F_1+\tilde{f}_2= a_2 + a_{12} + a_{17}$, we show that $F_1\ge0$. Thus an explicit proof is given.

%\begin{example}\label{example-2}
%Chao Tian's example in \cite{GuoYeungGao2022}.
%\end{example}
%
%By Algorithm \ref{explicit-proof}, we transform the problem to 
%
%\textbf{Problem P1} 
%
%Prove $F\ge0$ under 
%
%$548$ equality constraints $g_i=0$, where $F$ and $g_i$'s are in $a_j$;
%
%$647$ inequality constraints $a_{j}\ge0,\ j=1,\ldots,647$.
%
%By Farkas's Lemma, \textbf{Problem P1} is equivalent to 
%
%\textbf{Problem P2}
%
%Write $F$ as a conic combination of $a_j$ under the $548$ euqality constraints in $a_j$'s.
%
%Run Algorithm \ref{explicit-proof} on Maple, then we can obatin the conic combination of $F'$ quickly. There is no numerical calculation, and all calculations are implemented on maple.

\section{Two Applications}
\label{sec:appl}
In this section, we will present two applications of our method. The first one is an information inequality proved in \cite{Dougherty-Freiling-Zeger2007}. The second one is  the example used in \cite{Guo-Yeung-Gao2023}, in which we can significantly reduce the required computation for solving the LP. 

\subsection{Dougherty-Freiling-Zeger's Problem}

The information theoretic inequality needs to be proved in \cite{Dougherty-Freiling-Zeger2007} is specified by an LP with $8$ random variables.

\textbf{Problem} $\mathbf{P_1}$:
Prove $I(B;D,X,Z) \le I(W;A,B,C,D)$ under the constraints
\begin{equation}\label{DFZ-cons}
\begin{array}{ll}
I(W;A,B,C,D) = I(X;A,B,W),\ 
I(W;A,B,C,D) = I(Y;B,C,X),\ 
I(W;A,B,C,D) = I(Z;C,D,Y),\\ 
I(A;B,C,D,Z) = I(B;D,X,Z),\ 
I(B;A,D,W,Z) = I(B;D,X,Z),\ 
I(C;A,D,W,Z) = I(B;D,X,Z),\\
I(D;A,B,C,Y) = I(B;D,X,Z),\ 
I(C;A,W,Y) = I(B;D,X,Z),\ 
I(B;A) = 0,\ 
I(C;A,B) = 0,\\
I(D;A,B,C) = 0,\ 
I(X;C,D|A,B,W) = 0,\ 
I(Y;A,D,W|B,C,X) = 0,\ 
I(Z;A,B,W,X|C,D,Y) = 0.
\end{array}
\end{equation}

We now solve \textbf{Problem}~$\mathbf{P_1}$ by using Procedure I. 

\noindent
\textbf{Input:} \\
Objective information inequality: $\bar{F}=I(W;A,B,C,D)-I(B;D,X,Z)\geq 0$.\\
Inequality Constraints: 
the elemental information inequalities generated by random variables $A,B,C,D,X,Y,Z,W$ (totally $1800$ inequalities).\\
Equality Constraints:  totally $14$ equalities in \eqref{DFZ-cons}.

\textbf{Step 1}. The variable vector generated from $A,B,C,D,X,Y,Z,W$ has $2^{8}-1$ elements (joint entropies). 
Transform $\bar{F}$ into the joint entropy form $\widetilde{F}$.
Express the elemental information inequalities in terms of the joint entropies to obtain $\widetilde{C}_{i},\ i\in\mathcal{N}_{1800}$. Likewise, express the equality constraints in \eqref{DFZ-cons} in terms of the joint entropies to obtain $\widetilde{C}_{i}=0,\ i\in\mathcal{N}_{1814}\backslash\mathcal{N}_{1800}$.

\textbf{Step 2}. Apply Algorithm 1 to reduce $\{\widetilde{C}_{i},\ i\in\mathcal{N}_{1800}\}$ by $\{\widetilde{C}_{i},i\in\mathcal{N}_{1814}\backslash\mathcal{N}_{1800}\}$  to obtain $\{C_i\ge0,\ i\in\mathcal{N}_{1793}\}$.

%Note that in this step, C7 and C8 are reduced to two inequalities: $\alpha-h_{3,9}\geq0$ and $\beta-h_{12}\geq0$. So without loss of optimality, we can set the objective function to be $F_1=4h_{3,9}+6h_{12}-3B$, and do not need to consider these two conditions below.

\textbf{Step 3}. Reduce $\widetilde{F}$ by $\{\widetilde{C}_{i},i\in\mathcal{N}_{1814}\backslash\mathcal{N}_{1800}\}$ to obtain

 $F_5=h_{1, 2, 3} - h_{1, 2, 3, 4} - h_{1, 2, 3, 7} + h_{1, 2, 3, 4, 7} + h_{1, 2, 3, 4, 5, 6, 7} + h_{8} - h_{1, 2, 3, 4, 5, 6, 7, 8}$.

We need to solve 

\textbf{Problem \ref{problem-p8}$(*)$}. Prove $F_5\ge0$ under the constraints $C_i\ge0,i\in\mathcal{N}_{1793}$.

\textbf{Step 4}. 
Apply Algorithm 2 to obtain a reduced LP:

\textbf{Problem \ref{problem-p5}$(*)$}. Prove $F\ge0$ subject to $\widetilde{\mathcal{R}}(J_1)$ and $a_i\ge0,\ i\in\mathcal{N}_{1793}$, where $J_1=\{{f}_i,\ i\in\mathcal{N}_{1559}\}$. 
%\begin{problem}
%	Proving $F_2\ge0$ under the constraints $\{a_{i_j}\ge0,\ i_j\in\mathcal{N}_{1799},\ j\in\mathcal{N}_{1793}\}$ and $\{f_i=0, i\in\mathcal{N}_{1559}\}$, where $f_i$'s are polynomials in  $\{a_{i_j}\ge0,\ i_j\in\mathcal{N}_{1799},\ j\in\mathcal{N}_{1793}\}$.
%\end{problem}

\textbf{Step 5}. Apply Algorithm \ref{QuickIMP-RED} and Algorithm  \ref{Heuristic search} to the above problem successively.
Algorithm \ref{Heuristic search} outputs `SUCCESSFUL'. Thus $\bar{F}\ge0$ is provable.

In other words, we show that $F$ can be transformed to a conic combination of $\{a_{i}\ge0,\ i\in\mathcal{N}_{1793}\}$ by $\{f_i=0, i\in\mathcal{N}_{1559}\}$.
This is given by
$$\begin{array}{ll}
	F_3=\frac{1}{2}(a_{24} + a_{28} + a_{35} + a_{129} + a_{185} + a_{520} + a_{1048} + a_{1053} + a_{1187} + a_{1237} + a_{1556} + a_{1628} + a_{1681} + a_{1782}).
\end{array}$$
Thus $\bar{F}\ge0$ is provable.

Table I shows the advantage of Procedure I for this example by comparing it with the Direct LP method induced by Theorem II.2, with ITIP, and with the procedure in \cite{Guo-Yeung-Gao2023}. 
Note that the procedure in \cite{Guo-Yeung-Gao2023} can reduce this example to the minimum LP to the greatest extent possible. However, we even do not need to solve LP by Procedure I in this work.
%Note that in both ITIP and Procedure I, the number of variables is first reduced by the equality constraints before solving the LP. However, in ITIP, the number of inequality constraints is not reduced.

\begin{table*}[h!]
\caption{}
\label{table-compare0}
\begin{center}
\begin{tabular}{ |c|c|c|c|} 
 \hline
  & Number of variables & Number of equality constraints & Number of Inequality constraints \\ 
 \hline
%Direct LP method & 65535 & 1520304 & 1966112 \\ 
% \hline
%Reduction of random variables & 4095 & 22945 & 67608 \\ 
% \hline
Direct LP method & 255 & 14 & 1800 \\ 
 \hline
ITIP & 241  & 0 & 1800 \\ 
 \hline
LP obtained in \cite{Guo-Yeung-Gao2023} & 206& 0 & 1673 \\ 
\hline
This work &   \multicolumn{3}{c|}{no LP needs to be solved}  \\ 
\hline
\end{tabular}
\end{center}
\end{table*}

\subsection{Tian's Problem}

The framework of regenerating codes introduced in the seminal work of Dimakis {\it et al.} \cite{Dimakis2010}
addresses the fundamental tradeoff between the storage and repair bandwidth 
in erasure-coded distributed storage systems.
In \cite{Tian2014}, a new outer bound on the rate region for $(4,3,3)$ exact-repair
regenerating codes was obtained. This outer bound was proved by means of a 
computational approach built upon the LP framework in \cite{Yeung1997} for proving Shannon-type 
inequalities. The LP that needs to be solved, however, is exceedingly large. In order to make the computation manageable, Tian took advantage of the symmetry of the problem and other problem-specific structures to reduce the numbers of variables and constraints in the LP. This outer bound not only can provide a complete characterization of 
the rate region, but also proves the existence of a non-vanishing gap between the optimal tradeoff of the exact-repair codes and that of the functional-repair codes for the 
parameter set $(4,3,3)$.
It was the first time that a non-trivial information theory problem was solved using this LP framework.\footnote{It was subsequently proved analytically by Sasidharan {\it et al.} \cite{Sasidharan2016} that the same holds for every parameter set.}

In this work, we apply the results in the previous sections to Tian's problem, and give a simpler proof by our new method.
%Instead of formulating in terms of the $s$-variables, we formulate the LP in terms of the joint entropies (which are a linear transformation of the $s$-variables) in order to be  able to make a direct comparison. 
We first give the abstract formulation of the problem.

%In this section, we will consider Tian's example in \cite{Tian2014}. This example is also discussed in \cite{Sasidharan2016}. We will show the efficiency of our method which is not limited to the variables' choice in \textbf{Problem} $\mathbf{P_1}$ as $s$-variable. 
%In other words, even though the variables in \textbf{Problem} $\mathbf{P_1}$ are chosen as joint entropies instead of $s$-variables when someone would like to study this problem  on the previous computing platform (ITIP et al.), our approach developed in Section III still will simplify the calculation to the greatest extent.

 %show the efficiency of our method by proving Tian's example \cite{Tian2014} based on a program on Maple. This example is also discussed in \cite{Sasidharan2016}.
%
%For comparison, we use {\it joint entropies} as variables in the LP problem, not the aforementioned $s-$variables, though they are equivalent. This also shows that if the previous computing platform (ITIP et al.) integrates our computing ideas, it can also simplify the calculation to the greatest extent.

%We first describe Tian's work. 

\begin{define}\label{permutation}
A permutation $\pi$ on the set $\mathcal{N}_4$ is a one-to-one mapping $\pi$: $\mathcal{N}_4\rightarrow\mathcal{N}_4$. The collection of all permutations is denoted as $\prod$.
\end{define}

In the problem formulation, we consider $16$ random variables grouped into the following two sets:

$$\begin{array}{ll}
\mathcal{W}\!\!\!\!&=\{W_1,W_2,W_3,W_4\}, \\[0.1cm]
\mathcal{S}&=\{S_{1,2},S_{1,3},S_{1,4},S_{2,1},S_{2,3},S_{2,4},S_{3,1},S_{3,2},S_{3,4},S_{4,1},S_{4,2},S_{4,3}\}.
\end{array}$$

A permutation $\pi$ on $\mathcal{N}_4$ is applied to map one random variable to another random variable.
For example, the permutation $\pi(1,2,3,4)=(2,3,1,4)$ maps the random variable $W_1$ to $W_2$.
%set of random variables $\mathcal{A}=\{W_1,W_4\}$ to $\pi(\mathcal{A})=\{W_2, W_4\}$. 
Similarly it maps the random variable $S_{i,j}$ to $S_{\pi(i),\pi(j)}$. When $\pi$ is applied to a set of random variables, the permutation is applied to every random variable in the set. For example for the aforementioned permutation $\pi$, we have $\pi(W_1,\ S_{2,3})=(W_2,\ S_{3,1})$. 
% and for any subset of $\mathcal{S}$, we use a similar notation as for the case of $\mathcal{W}$.

%
%In Section III-B of \cite{Tian2014}, Tian proved that

The original problem is \\
\textbf{Problem} $\mathbf{P_6}$:  Prove
\begin{equation}\label{OB}
4\alpha+6\beta\geq 3B
\end{equation} 
under the constraints
\begin{itemize}
\item[C1] $H(\pi(\mathcal{A}),\pi(\mathcal{B}))=H(\mathcal{A},\mathcal{B})$, for any sets $\mathcal{A}\subseteq\mathcal{S}$ and $\mathcal{B}\subseteq\mathcal{W}$ and any permutation $\pi\in\prod$,
\item[C2] $H(\mathcal{W}\cup\mathcal{S}|\mathcal{A})=0$, any $\mathcal{A}\subseteq\mathcal{W}:|\mathcal{A}|=3$,
\item[C3] $H(S_{i,j}|W_i)=0,\ j\in \mathcal{N}_4,\ i\in\mathcal{N}_4\backslash\{j\}$,
\item[C4] $H(W_j|\{S_{i,j}\in\mathcal{S}: i\in \mathcal{N}_n\backslash\{j\}\})=0$, for any $j\in\mathcal{N}_4$,
\item[C5] $H(\mathcal{W}\cup\mathcal{S})=B$,
\item[C6]$H(\mathcal{A})=B$, for any $\mathcal{A}$ such that $|\mathcal{A}\cap\mathcal{W}|\geq3$,
\item[C7] $H(W_i)\le \alpha$, $W_i\in\mathcal{W}$,
\item[C8] $H(S_{i,j})\le\beta$, $S_{i,j}\in\mathcal{S}$.
\end{itemize}

For this specific problem, the random variables involved exhibit strong symmetry due to the setup of the problem. To reduce the scale of the problem, Tian proved in \cite[Section III-B]{Tian2014} that only a subset of the random variables in $\mathcal{W}\cup\mathcal{S}$ is needed for solving \textbf{Problem} $\mathbf{P_6}$. A similar idea was also used in \cite{Tian2015,Tian2022}.

According to Tian's proof in Section III-B of \cite{Tian2014}, \textbf{Problem} $\mathbf{P_6}$ can be reduced to the following simpler problem,\\
\textbf{Problem} $\mathbf{P_7}$: Prove 
\begin{equation}\label{OB}
4\alpha+6\beta\geq 3B
\end{equation} 
under the constraints: C1, C3, C4, C6, C7 and C8 on the $12$ random variables in the set
$$\begin{array}{ll}
\mathcal{W}_1\cup\mathcal{S}_1=\{W_1,W_2,W_4\}\cup \{S_{2,1},S_{3,1},S_{4,1},S_{1,2},S_{3,2},S_{4,2},S_{1,4},S_{2,4},S_{3,4}\}.
\end{array}$$

%\begin{remark}\label{reduce-RV}
%Motivated by Tian's analytic proof in \cite{Tian2014,Tian2022}, we observe that it may not be necessary to prove the inequality \eqref{OB} with all $16$ random variables. So the first step in simplifying this problem is to get rid of some random variables. Here, we select $12$ random variables: $\mathcal{W}_1\cup\mathcal{S}_1=\{W_1,W_2,W_4\}\cup \{S_{2,1},S_{3,1},S_{4,1},S_{1,2},S_{3,2},S_{4,2},S_{1,4},S_{2,4},S_{3,4}\}$. For reasons, we first choose one from $\mathcal{W}$ to remove, that is $W_3$ (because of the condition C1, this choice does not lose generality). Then in view of condition C4, $H(W_3|S_{1,3},S_{2,3},S_{4,3})=0$, we further choose other three random variables $S_{1,3},\ S_{2,3},\ S_{4,3}$ to remove. Considering the specific characteristics (high symmetry) of this problem, choosing part of the random variables for use is important for reduction in proving the inequality.
%\end{remark}

%\begin{remark}\label{move-cons.}
%%Based on our choice of random variables, we actually do not need constraints C2 and C5. 
%For comparison, we use {\it joint entropies} as variables in the LP problem, not the aforementioned $s-$variables, though they are equivalent. This also shows that if the previous computing platform (ITIP et al.) integrates our computing ideas, it can also simplify the calculation to the greatest extent.
%\end{remark}

\begin{remark}\label{h-entropy}
%The random variables' order is: $W_1,W_2,W_4, S_{2,1},S_{3,1},S_{4,1},S_{1,2},S_{3,2},S_{4,2},S_{1,4},S_{2,4},S_{3,4}$. 
In the following computation, in order to simplify the notation, we will use, for example, 
$h_{1,2,3,4,5,6,7,8,9,10,11,12}$ to represent the joint entropy {\small$H(W_1,W_2,W_4, S_{2,1},S_{3,1},S_{4,1},S_{1,2},S_{3,2},S_{4,2},S_{1,4},S_{2,4},S_{3,4})$}. Similarly, we will use $h_{1}$ to represent $H(W_1)$, $h_{2,5}$ to represent $H(W_2,S_{3,1})$, so on and so forth.
\end{remark}

We now solve \textbf{Problem}~$\mathbf{P_7}$ by using Procedure I. 

\noindent
\textbf{Input:} \\
Objective information inequality: $\bar{F}=4\alpha+6\beta-3B\geq 0$.\\
Inequality Constraints: 
the elemental information inequalities generated by random variables $\mathcal{W}_1\cup\mathcal{S}_1$ (total $67596$ inequalities); 
C7 and C8 (total $12$ inequalities).\\
Equality Constraints: C1, C3, C4 and C6 (total $22945$ equalities).\\

\textbf{Step 1}. The variable vector generated from $\mathcal{W}_1\cup\mathcal{S}_1$ has $2^{12}-1$ elements (joint entropies). Express C7, C8 and the elemental information inequalities in terms of the joint entropies to obtain $\widetilde{C}_{i},\ i\in\mathcal{N}_{67608}$.
According to conditions C1, C3, C4 and C6, write equality constraints in joint entropy forms: $\widetilde{C}_{i}=0,\ i\in\mathcal{N}_{90553}\backslash\mathcal{N}_{67608}$.

\textbf{Step 2}. Apply Algorithm 1 to reduce $\{\widetilde{C}_{i},\ i\in\mathcal{N}_{67608}\}$ by $\{\widetilde{C}_{i},i\in\mathcal{N}_{90553}\backslash\mathcal{N}_{67608}\}$  to obtain  $\{C_i\ge0,\ i\in\mathcal{N}_{10189}\}$.

%Note that in this step, C7 and C8 are reduced to two inequalities: $\alpha-h_{3,9}\geq0$ and $\beta-h_{12}\geq0$. So without loss of optimality, we can set the objective function to be $F_1=4h_{3,9}+6h_{12}-3B$, and do not need to consider these two conditions below.

\textbf{Step 3}. Reduce $\widetilde{F}$ by $\{\widetilde{C}_{i},i\in\mathcal{N}_{90553}\backslash\mathcal{N}_{67608}\}$ to obtain $F_5=4\alpha+6\beta-3h_{2,3,5,6,7,8,10}$.

We need to solve 

\textbf{Problem \ref{problem-p8}$(*)$}. Prove $F_5\ge0$ under the constraints $C_i\ge0,i\in\mathcal{N}_{10189}$.
%\begin{problem}
%	Proving $F_1\ge0$ under the constraints $\mathcal{R}(\mathbf{C}_1)=\{g_i\ge0,i\in\mathcal{N}_{62981}\}$.
%\end{problem}

\textbf{Step 4}. 
%Use Algorithm 4 to obtain $S_{M}=\widetilde{E}\cup S_{r'}$, and $S_{r'}=\{\mathbb{C}_{i},i\in\mathcal{N}_{649}\}$. Here we list the formulas used below and omit the others.
%
Apply Algorithm \ref{Algorithm-transf.} to obtain a reduced LP:

\textbf{Problem \ref{problem-p5}$(*)$}. Prove $F\ge0$ subject to $\widetilde{\mathcal{R}}(J_1)$ and $a_i\ge0,\ i\in\mathcal{N}_{10189}$, where $J_1=\{{f}_i,\ i\in\mathcal{N}_{9859}\}$. 
%\begin{problem}
%Proving $F_2\ge0$ under the constraints $\{a_{i_j}\ge0,\ i_j\in\mathcal{N}_{62981},\ j\in\mathcal{N}_{10189}\}$ and $\{f_i=0, i\in\mathcal{N}_{9859}\}$, where $f_i$'s are polynomials in  $\{a_{i_j}\ge0,\ i_j\in\mathcal{N}_{62981},\ j\in\mathcal{N}_{10189}\}$.
%\end{problem}

\textbf{Step 5}. Apply Algorithm \ref{QuickIMP-RED} and Algorithm  \ref{Heuristic search} to the above problem successively.
Algorithm  \ref{Heuristic search} outputs `SUCCESSFUL'. Thus $\bar{F}\ge0$ is provable.

In other words, we show that $F$ can be transformed to a conic combination of $\{a_{i},\ i\in\mathcal{N}_{10189}\}$ by  $\{f_i=0, i\in\mathcal{N}_{9859}\}$.
This is given by 
\begin{equation}\label{Final-1}
\begin{array}{ll}
F_3=7 a_{6} + 2 a_{85} + 4 a_{94} + a_{119} + a_{167} + 3 a_{169} + 4 a_{211} + a_{223} + a_{290} + a_{335} + a_{340} + a_{353} + 4 a_{450} + 4 a_{484}\\
~~~ + a_{519} + a_{667} + a_{727} + 4 a_{735} + a_{819} + a_{820} + a_{827} + a_{829} + a_{859} + a_{868} + 3 a_{906} + a_{916}+4a_{10188}+6a_{10189}.
\end{array}
\end{equation}
The fomulas used above is listed in Appendix B.

Table \ref{table-compare1} shows the advantage of Procedure~I for Tian's problem by comparing it with the Direct LP method induced by Theorem~\ref{LP-S}, ITIP, Tian's method in \cite{Tian2014}, and our previous work in \cite{Guo-Yeung-Gao2023}.
\begin{table*}[h!]
\caption{}
\label{table-compare1}
\begin{center}
\begin{tabular}{ |c|c|c|c|} 
 \hline
  & Number of variables & Number of equality constraints & Number of Inequality constraints \\ 
 \hline
%Direct LP method & 65535 & 1520304 & 1966112 \\ 
% \hline
%Reduction of random variables & 4095 & 22945 & 67608 \\ 
% \hline
Direct LP method & 4098 & 22945 & 67608 \\ 
 \hline
ITIP & 600  & 0 & 67608 \\ 
 \hline
Tian's Method & 176 & 0 & 6152 \\ 
 \hline
LP in \cite{Guo-Yeung-Gao2023} & 101& 0 & 649 \\ 
\hline
%Final LP solved in  \cite{Guo-Yeung-Gao2023} & 548 & 0 & 647\\
%\hline
This work &   \multicolumn{3}{c|}{no LP needs to be solved}  \\ 
\hline
\end{tabular}
\end{center}
\end{table*}

Note that even if Algorithm \ref{Heuristic search} cannot obtain a conic combination as desired, 
it can still reduce the problem to the minimal LP in a shorter time and with less memory compared with our previous work \cite{Guo-Yeung-Gao2023}. Table \ref{table-compare2} shows the advantage of Procedure~I for reducing Tian's problem by comparing it with the procedure in \cite{Guo-Yeung-Gao2023}.
In Table \ref{table-compare2}, ``Time'' and ``Memory'' refer to the time and memory it takes to simplify the original LP to the minimal LP, respectively. The experiment results are obtained by MAPLE running on a desktop PC with an i7-6700 Core, 3.40GHz CPU and 16G memory. 
%We also give an explicit proof of this minimal LP in Appendix C.

\begin{table*}[h!]
	\caption{}
	\label{table-compare2}
	\begin{center}
		\begin{tabular}{ |c|c|c|c|c|} 
			\hline
			& Number of variables  & Number of Inequality constraints & Time & Memory \\ 
			\hline
		LP in  \cite{Guo-Yeung-Gao2023} & 101 & 649 &  23000s & 900M\\
		\hline
			LP in this work &  101 & 649 & 33s & 70M \\ 
			\hline
		\end{tabular}
	\end{center}
\end{table*}

\section{Conclusion and discussion}
\label{sec-conc}
As discussed in Section III, since different elimination choices of variables can lead to different results, this heuristic method (Algorithm \ref{Heuristic search}) may not necessarily succeed. Nevertheless, if the first attempt is unsuccessful, we can repeat the attempt with different elimination choices of variables for a certain maximum number of times.
Next, we summarize in Table \ref{table-compare3} some experimental results on the effectiveness of Algorithm 4 for solving various problems. In the table, ``TS'' denotes how many times need to run Algorithm \ref{Heuristic search} to get a successful result, ``TSH'' denotes how many times out of a hundred runs of Algorithm 4 are successful, and ``Time'' denotes the time required to repeatedly run Algorithm \ref{Heuristic search} to obtain a successful result.

%Next, we will provide a table (Table \ref{table-compare3}) to illustrate how successful results were obtained based on the examples in this paper. This table describes how many times need to run Algorithm \ref{Heuristic search} to get a successful result, how many times out of a hundred runs of Algorithm 4 are successful, and the time required to repeatedly run Algorithm \ref{Heuristic search} to obtain a successful result. These three indices are denoted by TS, TSH, and Time respectively.

\begin{table*}[h!]
	\caption{}
	\label{table-compare3}
	\begin{center}
		\begin{tabular}{ |c|c|c|c|} 
			\hline
			& TS  & TSH &  Time \\ 
			\hline
			Example \ref{III.5} & 2 & 52 &  2s \\
			\hline
			Example \ref{example-1} & 1 & 100 &  0.2s \\
			\hline
			Dougherty-Freiling-Zeger's problem &  3 & 32 & 11s \\ 
			\hline
		 Tian's problem &  12 & 10 & 80s \\ 
			\hline
		\end{tabular}
	\end{center}
\end{table*}

The data given in Table \ref{table-compare3}  is for reference only. The problems listed in the table are all solvable. For problems that are not solvable, we have to use Algorithm \ref{FindComb} to solve an LP which typically has a much smaller size compared with the original problem, and Algorithm \ref{FindComb} will output `FALSE'. Compared with \cite{Guo-Yeung-Gao2023}, the method here for obtaining the reduced minimal characterization set is considerably simpler.

To end this paper, we put forth the following conjecture on the effectiveness of Algorithm \ref{Heuristic search}:

\textbf{Conjecture.} If the problem is solvable, then there exists at least one ordering of the variables such that Algorithm \ref{Heuristic search} outputs `SUCCESSFUL'.

\newpage
%\appendices
\section*{Appendices}

\subsection{Two enhancements of Algorithm \ref{QuickIMP-RED}}

%\begin{define}
%	For the Problem \ref{problem-op}, we reduce $F$ by $E_a$ to obtain $F_1$, and reduce $S_a$ by $E_a$ to obtain $S_o$, then the Problem \ref{problem-op} is equivalent to
%	the LP problem \textbf{P1}: Proving $F_1$ under $S_o$.	\\
%	A LP is said to be a pure LP of the Problem \ref{problem-op} if there does not exist implied equalities in \textbf{P1}, 
%	and is said to be a minimal LP of the Problem \ref{problem-op} if there does not exist redundant inequalities in \textbf{P1}.  
%\end{define}
In this section, we present two algorithms as enhancements of Algorithm \ref{QuickIMP-RED}. We call Problem \ref{problem-op} a pure LP  if it contains no implied equality,  and call it a minimal LP if it contains no redundant inequality.  

%\begin{theorem}\label{pure-re}
%	Problem $\textbf{P1}$ is a pure LP if and only if there are no implied equalities in  Problem \ref{problem-op}.
%	Problem $\textbf{P1}$ is a minimal LP if and only if there are no redundant inequalities in  Problem \ref{problem-op}.
%\end{theorem}

%Based on Theorem \ref{pure-re}, 
First, we give a general algorithm for reducing Problem \ref{problem-op} to a pure LP.
\ 
\\
\begin{breakablealgorithm}
	\caption{Pure LP Algorithm}
	\label{FindImp}
	\begin{algorithmic}[1]
		\REQUIRE \ Problem \ref{problem-op}.
		\ENSURE \  A pure LP.
		\\[0.4cm]
		%		\STATE Run Algorithm \ref{Algorithm-transf.} to obtain the Problem \ref{problem-op}.
		% $I_1:=\emptyset$, 
		\STATE Use Algorithm \ref{QuickIMP-RED} to reduce Problem \ref{problem-op} to
		\STATE \textbf{Problem \ref{problem-op1}}. Proving $F_1\ge0$ subject to $\widetilde{\mathcal{R}}(E_1)$ and $\mathcal{R}(S_{1})$.
		\FOR{$i$ from $1$ to $m$}
		\STATE Apply Algorithm \ref{Heuristic search} to solve the following LP:
		\begin{problem}\label{problem-p6}
			Prove $-a_i\ge0$ subject to $\widetilde{\mathcal{R}}(E_1)$ and $\mathcal{R}(S_{1})$.
		\end{problem}
		\IF{Algorithm \ref{Heuristic search} outputs `SUCCESSFUL'}
		\STATE $E_1:=subs(a_i=0,E_1)\backslash\{0\}$.
	\STATE $S_{1}:=S_{1}\backslash\{a_i\}$.
		\STATE $F_1:=subs(a_i=0,F_1)$.
		\ELSE
		\STATE Algorithm \ref{Heuristic search} outputs a reduced LP. Apply Algorithm \ref{FindComb} to solve this LP.
		\IF{Algorithm  \ref{FindComb} outputs `TRUE'}
		\STATE $E_1:=subs(a_i=0,E_1)\backslash\{0\}$.
	\STATE $S_{1}:=S_{1}\backslash\{a_i\}$.
		\STATE $F_1:=subs(a_i=0,F_1)$.
		\ENDIF
		\ENDIF
		\ENDFOR
%	\STATE	Update $E_1$ by the RREF of $E_1$, and 
\STATE Reduce $F_1$ by $E_1$ to obtain the remainder $F_2$ and the RREF of $E_1$, $E_2$.
		\RETURN 
		A pure LP:
		\begin{problem}\label{pureproblem}
			Prove $F_2$ subject to $\widetilde{R}(E_2)$ and $\mathcal{R}(V(\{F_2\}\cup E_2))$.
		\end{problem}		
	\end{algorithmic}
\end{breakablealgorithm}
\ 
\\[0.2cm]
Next, we give a general algorithm for finding a minimal LP from Problem  \ref{problem-op}.
\ 
\\
\begin{breakablealgorithm}
	\caption{Minimal LP Algorithm}
	\label{Findredundant}
	\begin{algorithmic}[1]
		\REQUIRE \ Problem \ref{problem-op}.
		\ENSURE \  A minimal LP.
		\\[0.4cm]
		\STATE Run Algorithm \ref{QuickIMP-RED} to reduce Problem \ref{problem-op} to
		\STATE \textbf{Problem \ref{problem-op1}}. Prove $F_1\ge0$ subject to $\widetilde{\mathcal{R}}(E_1)$ and $\mathcal{R}(S_1)$.
		\FOR{$i$ from $1$ to $m$}
		\STATE Let $\bar{S}_{a}=S_1\backslash\{a_i\}$.
		\IF{$a_i\in V(f)$ for some $f \in E_1$}
		\STATE Solve $a_i$ from $f=0$ to obtain $a_i=A_i$. 
		%where $A_i$ is in $\mathcal{R}^{-1}(S_2)\backslash\{a_i\}$.
		\ELSE
		\STATE Let $A_i$ be $a_i$.
		\ENDIF
		\STATE $\bar{E}_{a}:=subs(a_i=A_i,E_1)\backslash\{0\}$.     %Substitute $a_i=A_i$ to $E_{1}$ to obtain $\bar{E}_{a}$.
		\STATE Run Algorithm \ref{Heuristic search} to solve the following LP:
		\begin{problem}\label{problem-p7}
			Prove $A_i\ge0$ subject to $\widetilde{\mathcal{R}}(\bar{E}_{a})$ and $\mathcal{R}(\bar{S}_{a})$.
		\end{problem}
	\IF{Algorithm \ref{Heuristic search} outputs `SUCCESSFUL'}
		\STATE $E_1:=subs(a_i=A_i,E_1)\backslash\{0\}$.
	\STATE $S_1:=S_1\backslash\{a_i\}$.
	\STATE $F_1:=subs(a_i=A_i,F_1)$.
	\ELSE 
	\STATE Algorithm \ref{Heuristic search} outputs a reduced LP. Apply Algorithm \ref{FindComb} to this LP.
	\IF{Algorithm  \ref{FindComb} outputs `TRUE'}
		\STATE $E_1:=subs(a_i=A_i,E_1)\backslash\{0\}$.
		\STATE $S_1:=S_1\backslash\{a_i\}$.
		\STATE $F_1:=subs(a_i=A_i,F_1)$.
		%   	\STATE	$R_1:=R_1\cup \{a_i\ge0\}$.
		\ENDIF
		\ENDIF
		\ENDFOR
%	\STATE	Update $E_1$ by the RREF of $E_1$.
\STATE Reduce $F_1$ by $E_1$ to obtain the remainder $F_2$ and the RREF of $E_1$, $E_2$.
		\RETURN 
		%A set of redundant inequalities $R_1$ \\
		A minimal LP:
		\begin{problem}
			Proving $F_2$ subject to  $\widetilde{\mathcal{R}}(E_2)$ and $\mathcal{R}(V(\{F_2\}\cup E_2))$.
		\end{problem}
	\end{algorithmic}
\end{breakablealgorithm}
\ 
\\
%
%\begin{define}
%	First apply Algorithm \ref{FindImp} to Problem \ref{problem-op} to obtain problem \textbf{P1}, and then apply  Algorithm \ref{Findredundant} to Problem \textbf{P1} to obtain problem \textbf{P2}.
%	Problem \textbf{P2} is called a reduced minimal LP of Problem \ref{problem-op}.
%\end{define}

Next, we give the detailed steps of the reduction from Problem \ref{problem-op3}$(*)$ to Problem \ref{problem-op4}$(*)$. 
%Firstly, we recall Problem \ref{problem-op3}:

%	\textbf{Problem \ref{problem-op3}}. Prove $F_2\ge0$ subject to $\widetilde{R}(\widetilde{\mathcal{E}}_1)$ and $\mathcal{R}(S_1)$, where  $F_2=-\frac{1}{2}a_2+a_6\ge0$, $\widetilde{\mathcal{E}}_1=\{a_1+a_2+2a_9,\ a_3+a_4+2a_9\}$, and $S_1=\{a_1,a_2,a_3,a_4,a_6,a_9\}$.

%\textbf{Problem \ref{problem-op3}}. Prove $F_2\ge0$ subject to $\widetilde{R}(\widetilde{\mathcal{E}}_1)$ and $\mathcal{R}(S_1)$, where  
%$F_2=-\frac{1}{2}(a_2-a_4-3a_9-a_{10}+a_{11}+a_{12})$, \\
%$\widetilde{\mathcal{E}}_1=\{a_1 + a_2 - a_4 + a_9 + a_{10}- a_{11}-a_{12}, a_3 + a_9 + a_{10} - a_{11}-a_{12}, a_{6} - a_{9}- a_{10} + a_{11}+a_{12}\}$, 
%and \\
%$S_1=\{a_1,a_2,a_3,a_4,a_6,a_9,a_{10},a_{11},a_{12}\}$.

// We first follow Algorithm \ref{FindImp}.

\textbf{Step 1}. Use Algorithm \ref{QuickIMP-RED} to reduce Problem \ref{problem-op3}$(*)$ to

 \textbf{Problem \ref{problem-op1}$(*)$}. Proving $F_1\ge0$ subject to $\widetilde{\mathcal{R}}(E_1)$ and $\mathcal{R}(S_{1})$, 
 where $F_1=-\frac{1}{2}(a_2-a_4-3a_9-a_{10}+a_{11}+a_{12})$, 
 $E_1=\{a_1 + a_2 - a_4 + a_9 + a_{10}- a_{11}-a_{12}, a_3 + a_9 + a_{10} - a_{11}-a_{12}, a_{6} - a_{9}- a_{10} + a_{11}+a_{12}\}$, 
 and\\ $S_{1}=\{a_1,a_2,a_3,a_4,a_6,a_9,a_{10},a_{11},a_{12}\}$.

\textbf{Step 2}. For $i\in\mathcal{N}_{12}\backslash\{5,7,8\}$, run Algorithm \ref{Heuristic search} to solve the following LP:

\textbf{Problem \ref{problem-p6}$(*)$}. Prove $-a_i\ge0$ subject to $\widetilde{\mathcal{R}}(E_1)$ and $\mathcal{R}(S_{1})$.

\textbf{Step 3}. Algorithm \ref{Heuristic search} outputs `SUCCESSFUL' when $i=3$, then let

 $E_1=subs(a_3=0,E_1)\backslash\{0\}=\{a_1 + a_2 - a_4 + a_9 + a_{10}- a_{11}-a_{12}, a_9 + a_{10} - a_{11}-a_{12}, a_{6} - a_{9}- a_{10} + a_{11}+a_{12}\}$, 
 
 $S_{1}=S_{1}\backslash\{a_3\}=\{a_1,a_2,a_4,a_6,a_9,a_{10},a_{11},a_{12}\}$, 
 
 $F_1=subs(a_i=0,F_1)=-\frac{1}{2}(a_2-a_4-3a_9-a_{10}+a_{11}+a_{12})$.
 
 \textbf{Step 4}. Algorithm \ref{Heuristic search} outputs `SUCCESSFUL' when $i=6$, then let
 
 $E_1=subs(a_6=0,E_1)\backslash\{0\}=\{a_1 + a_2 - a_4 + a_9 + a_{10}- a_{11}-a_{12}, a_9 + a_{10} - a_{11}-a_{12}, - a_{9}- a_{10} + a_{11}+a_{12}\}$, 
 
 $S_{1}=S_{1}\backslash\{a_6\}=\{a_1,a_2,a_4,a_9,a_{10},a_{11},a_{12}\}$, 
 
 $F_1=subs(a_i=0,F_1)=-\frac{1}{2}(a_2-a_4-3a_9-a_{10}+a_{11}+a_{12})$.
 
 // For $i\in\mathcal{N}_{12}\backslash\{3,5,6,7,8\}$, Algorithm \ref{Heuristic search} outputs `UNSUCCESSFUL' and Algorithm \ref{FindComb} outputs `FALSE'.
 
 \textbf{Step 5}. Reduce $F_1$ by $E_1$ to obtain

\textbf{Problem \ref{pureproblem}$(*)$}. Prove $F_2$ subject to $\widetilde{R}(E_2)$ and $\mathcal{R}(V(\{F_2\}\cup E_2))$,

where $F_2=-\frac{1}{2}a_{2}+ \frac{1}{2}a_{4} - a_{10} + a_{11} + a_{12}$ and $E_2=\{a_1 + a_2 - a_4, a_9 + a_{10} - a_{11}-a_{12}\}$.
 \\

// Next, we will follow Algorithm \ref{Findredundant}.

\textbf{Step 6}. Use Algorithm \ref{QuickIMP-RED} to reduce Problem \ref{pureproblem}$(*)$ to 

\textbf{Problem \ref{problem-op1}$(*)$}. Proving $F_1\ge0$ subject to $\widetilde{\mathcal{R}}(E_1)$ and $\mathcal{R}(S_{1})$, 

where $F_1=\frac{1}{2}a_1-a_{10}+a_{11}+a_{12}$, $E_1=\{a_9 + a_{10} - a_{11}-a_{12}\}$, and $S_1=\{a_1,a_9,a_{10},a_{11},a_{12}\}$.

// Now we obtain Problem \ref{problem-op4} in Example \ref{III.5}.

\subsection{Formulas in \eqref{Final-1}}
\label{Appendix-B}
In this section, we list the formulas used in \eqref{Final-1}.
$$\begin{array}{ll}
	2h_{11} - h_{11, 12} =a_{6}, \\
	h_{6, 7, 8, 9, 10, 11, 12} - h_{2, 3, 8, 9, 10, 11, 12} =a_{85}, \\
	2h_{11, 12} - h_{3, 9, 10, 11, 12} - h_{11} =a_{94}, \\
	2h_{3, 8, 9, 11} - h_{2, 3, 6, 9, 10, 12} - h_{3, 9, 12} =a_{119}, \\
	h_{3, 9} + h_{9, 12} - h_{3, 8, 9} - h_{11} =a_{167}, \\
	h_{3, 9} + h_{11, 12} - h_{3, 8, 9} - h_{11} =a_{169}, \\
	h_{3, 8, 9} + h_{8, 9, 10, 12} - h_{3, 5, 7, 9} - h_{8, 10, 12} =a_{211}, \\
	h_{3, 9, 12} + h_{6, 7, 9, 10} - h_{1, 5, 10, 12} - h_{6, 9, 11} =a_{223}, \\
	h_{8, 11, 12} + h_{6, 9, 11} - h_{6, 7, 9, 10} - h_{9, 12} =a_{290}, \\
	h_{1, 5, 10, 12} + h_{6, 7, 9, 10, 11} - h_{3, 8, 9, 11, 12} - h_{6, 7, 9, 10} =a_{335}, \\
	h_{2, 3, 11, 12} + h_{3, 8, 10, 11} - h_{2, 3, 8, 10, 11} - h_{3, 8, 9, 11} =a_{340}, \\
	h_{2, 3, 11, 12} + h_{3, 5, 6, 7, 10, 12} - h_{2, 3, 8, 11, 12} - h_{3, 5, 6, 7, 9, 10} =a_{353}, \\
	h_{3, 8, 10, 12} + h_{3, 5, 7, 9} - h_{2, 3, 11, 12} - h_{8, 9, 10, 12} =a_{450}, \\
	h_{3, 9, 11, 12} + h_{8, 10, 12} - h_{3, 8, 10, 12} - h_{11, 12} =a_{484}, \\
	h_{6, 7, 9, 10} + h_{8, 9, 11, 12} - h_{6, 7, 9, 10, 11} - h_{8, 11, 12} =a_{519}, \\
	h_{2, 3, 8, 11, 12} + h_{3, 5, 7, 9, 10, 11, 12} - h_{2, 3, 8, 9, 10, 11, 12} - h_{3, 5, 6, 7, 10, 12} =a_{667}, \\
	h_{3, 8, 9, 11, 12} + h_{6, 8, 9, 11, 12} - h_{3, 5, 6, 7, 10, 12} - h_{8, 9, 11, 12} =a_{727}, \\
	h_{3, 9, 10, 11, 12} + h_{3, 8, 10, 12} - h_{8, 9, 10, 11, 12} - h_{3, 9, 11, 12} =a_{735}, \\
	h_{8, 9, 10, 11, 12} + h_{3, 5, 6, 7, 10, 12} - h_{3, 5, 7, 9, 10, 11, 12} - h_{3, 8, 10, 12} =a_{819}, \\
	h_{8, 9, 10, 11, 12} + h_{3, 5, 6, 7, 10, 12} - h_{3, 5, 7, 9, 10, 11, 12} - h_{3, 8, 9, 11, 12} =a_{820}, \\
	h_{8, 9, 10, 11, 12} + h_{2, 3, 8, 9, 10, 11, 12} - h_{6, 7, 8, 9, 10, 11, 12} - h_{3, 8, 10, 11} =a_{827}, \\
	h_{2, 3, 6, 9, 10, 12} + h_{2, 3, 8, 10, 11} - h_{2, 3, 8, 9, 10, 11, 12} - h_{2, 3, 11, 12} =a_{829}, \\
	h_{3, 5, 6, 7, 9, 10} + h_{3, 8, 9, 11, 12} - h_{3, 5, 6, 7, 10, 12} - h_{3, 8, 9, 11} =a_{859}, \\
	h_{3, 5, 6, 7, 10, 12} + h_{8, 9, 10, 11, 12} - h_{2, 3, 8, 9, 10, 11, 12} - h_{6, 8, 9, 11, 12} =a_{868}, \\
	h_{2, 3, 8, 9, 10, 11, 12} + h_{2, 3, 11, 12} - h_{2, 3, 5, 6, 7, 8, 10} - h_{3, 8, 10, 12} =a_{906}, \\
	h_{3, 5, 7, 9, 10, 11, 12} + h_{2, 3, 8, 9, 10, 11, 12} - h_{6, 7, 8, 9, 10, 11, 12} - h_{3, 5, 6, 7, 10, 12} =a_{916},\\
	\alpha - h_{3, 9}=a_{10188},\ \ 
	\beta - h_{11}=a_{10189}.
\end{array}$$

\newpage

% biography section
%
% If you have an EPS/PDF photo (graphicx package needed) extra braces are
% needed around the contents of the optional argument to biography to prevent
% the LaTeX parser from getting confused when it sees the complicated
% \includegraphics command within an optional argument. (You could create
% your own custom macro containing the \includegraphics command to make things
% simpler here.)
%\begin{IEEEbiography}[{\includegraphics[width=1in,height=1.25in,clip,keepaspectratio]{mshell}}]{Michael Shell}
% or if you just want to reserve a space for a photo:

% that's all folks
\end{document}